\documentclass[pra,preprint,groupedaddress,showmsc,superscriptaddress,twocolumn,10pt]{revtex4}
\usepackage{amsmath,amsfonts,amssymb,caption,hyperref,color,epsfig,graphics,graphicx,latexsym,mathrsfs,revsymb,theorem,url,verbatim,epstopdf,cleveref,dashrule,cases,booktabs,diagbox,threeparttable,setspace,ragged2e}
\hypersetup{colorlinks,linkcolor={blue},citecolor={blue},urlcolor={red}}  

\usepackage[center]{titlesec}
\setcounter{section}{0}

\hypersetup{colorlinks,linkcolor={blue},citecolor={red},urlcolor={blue}}

\newtheorem{definition}{Definition}

\newtheorem{lemma}[definition]{Lemma}

\newtheorem{theorem}[definition]{Theorem}

\newtheorem{example}[definition]{Example}

\def\squareforqed{\hbox{\rlap{$\sqcap$}$\sqcup$}}
\def\qed{\ifmmode\squareforqed\else{\unskip\nobreak\hfil
\penalty50\hskip1em\null\nobreak\hfil\squareforqed
\parfillskip=0pt\finalhyphendemerits=0\endgraf}\fi}
\def\endenv{\ifmmode\;\else{\unskip\nobreak\hfil
\penalty50\hskip1em\null\nobreak\hfil\;
\parfillskip=0pt\finalhyphendemerits=0\endgraf}\fi}
\newenvironment{proof}{\noindent \textbf{{Proof.~} }}{\qed}
\def\Dbar{\leavevmode\lower.6ex\hbox to 0pt
{\hskip-.23ex\accent"16\hss}D}
\def\bpf{\begin{proof}}
\def\epf{\end{proof}}

\newcommand{\bra}[1]{\langle{#1}|}
\newcommand{\ket}[1]{|{#1}\rangle}
\newcommand{\proj}[1]{|{#1}\rangle \langle {#1}|}
\newcommand{\ketbra}[2]{|{#1}\rangle \! \langle{#2}|}

\newcommand{\abs}[1]{\left\lvert {#1} \right\rvert}

\newcommand{\etal}{{\sl et~al.}}

\newcommand{\red}{\textcolor{red}}

\newcommand{\opp}{\red{OPEN PROBLEMS}.~}
\newcommand{\tbc}{\red{TO BE CONTINUED}.~}

\newcommand{\nc}{\newcommand}

\def\bea{\begin{eqnarray}}
\def\eea{\end{eqnarray}}
\def\beq{\begin{equation}}
\def\eeq{\end{equation}}
\def\bal{\begin{aligned}}
\def\eal{\end{aligned}}
\def\bma{\begin{bmatrix}}
\def\ema{\end{bmatrix}}

\def\rank{\mathop{\rm rank}}

\def\diag{\mathop{\rm diag}}
\def\tr{{\rm Tr}}
\def\dim{\mathop{\rm Dim}}

\def\dg{\dagger}

\def\ox{\otimes}

\def\a{\alpha}
\def\b{\beta}

\def\m{\mu}
\def\n{\nu}

\def\r{\rho}

\def\ph{\varphi}

\def\ps{\psi}
\def\o{\omega}


\def\Ps{\Psi}
\def\O{\Omega}

\nc{\bbA}{\mathbb{A}} \nc{\bbB}{\mathbb{B}} \nc{\bbC}{\mathbb{C}}
\nc{\bbD}{\mathbb{D}} \nc{\bbE}{\mathbb{E}} \nc{\bbF}{\mathbb{F}}
\nc{\bbG}{\mathbb{G}} \nc{\bbH}{\mathbb{H}} \nc{\bbI}{\mathbb{I}}
\nc{\bbJ}{\mathbb{J}} \nc{\bbK}{\mathbb{K}} \nc{\bbL}{\mathbb{L}}
\nc{\bbM}{\mathbb{M}} \nc{\bbN}{\mathbb{N}} \nc{\bbO}{\mathbb{O}}
\nc{\bbP}{\mathbb{P}} \nc{\bbQ}{\mathbb{Q}} \nc{\bbR}{\mathbb{R}}
\nc{\bbS}{\mathbb{S}} \nc{\bbT}{\mathbb{T}} \nc{\bbU}{\mathbb{U}}
\nc{\bbV}{\mathbb{V}} \nc{\bbW}{\mathbb{W}} \nc{\bbX}{\mathbb{X}}
\nc{\bbZ}{\mathbb{Z}}

\nc{\bA}{{\bf A}} \nc{\bB}{{\bf B}} \nc{\bC}{{\bf C}}
\nc{\bD}{{\bf D}} \nc{\bE}{{\bf E}} \nc{\bF}{{\bf F}}
\nc{\bG}{{\bf G}} \nc{\bH}{{\bf H}} \nc{\bI}{{\bf I}}
\nc{\bJ}{{\bf J}} \nc{\bK}{{\bf K}} \nc{\bL}{{\bf L}}
\nc{\bM}{{\bf M}} \nc{\bN}{{\bf N}} \nc{\bO}{{\bf O}}
\nc{\bP}{{\bf P}} \nc{\bQ}{{\bf Q}} \nc{\bR}{{\bf R}}
\nc{\bS}{{\bf S}} \nc{\bT}{{\bf T}} \nc{\bU}{{\bf U}}
\nc{\bV}{{\bf V}} \nc{\bW}{{\bf W}} \nc{\bX}{{\bf X}}
\nc{\bZ}{{\bf Z}}

\nc{\bmA}{{\bm A}} \nc{\bmB}{{\bm B}} \nc{\bmC}{{\bm C}}
\nc{\bmD}{{\bm D}} \nc{\bmE}{{\bm E}} \nc{\bmF}{{\bm F}}
\nc{\bmG}{{\bm G}} \nc{\bmH}{{\bm H}} \nc{\bmI}{{\bm I}}
\nc{\bmJ}{{\bm J}} \nc{\bmK}{{\bm K}} \nc{\bmL}{{\bm L}}
\nc{\bmM}{{\bm M}} \nc{\bmN}{{\bm N}} \nc{\bmO}{{\bm O}}
\nc{\bmP}{{\bm P}} \nc{\bmQ}{{\bm Q}} \nc{\bmR}{{\bm R}}
\nc{\bmS}{{\bm S}} \nc{\bmT}{{\bm T}} \nc{\bmU}{{\bm U}}
\nc{\bmV}{{\bm V}} \nc{\bmW}{{\bm W}} \nc{\bmX}{{\bm X}}
\nc{\bmZ}{{\bm Z}}

\nc{\cA}{{\cal A}} \nc{\cB}{{\cal B}} \nc{\cC}{{\cal C}}
\nc{\cD}{{\cal D}} \nc{\cE}{{\cal E}} \nc{\cF}{{\cal F}}
\nc{\cG}{{\cal G}} \nc{\cH}{{\cal H}} \nc{\cI}{{\cal I}}
\nc{\cJ}{{\cal J}} \nc{\cK}{{\cal K}} \nc{\cL}{{\cal L}}
\nc{\cM}{{\cal M}} \nc{\cN}{{\cal N}} \nc{\cO}{{\cal O}}
\nc{\cP}{{\cal P}} \nc{\cQ}{{\cal Q}} \nc{\cR}{{\cal R}}
\nc{\cS}{{\cal S}} \nc{\cT}{{\cal T}} \nc{\cU}{{\cal U}}
\nc{\cV}{{\cal V}} \nc{\cW}{{\cal W}} \nc{\cX}{{\cal X}}
\nc{\cZ}{{\cal Z}}


\usepackage{dsfont}



\begin{document}

\title{Absolutely maximally entangled states in tripartite heterogeneous systems}

\author{Yi Shen}\email[]{yishen@buaa.edu.cn}
\affiliation{School of Mathematical Sciences, Beihang University, Beijing 100191, China}
\author{Lin Chen}\email[]{linchen@buaa.edu.cn (corresponding author)}
\affiliation{School of Mathematical Sciences, Beihang University, Beijing 100191, China}
\affiliation{International Research Institute for Multidisciplinary Science, Beihang University, Beijing 100191, China}

\begin{abstract}
Absolutely maximally entangled (AME) states are typically defined in homogeneous systems. 
However, the quantum system is more likely to be heterogeneous in a practical setup.
In this work we pay attention to the construction of AME states in tripartite heterogeneous systems. 
We first introduce the concept of irreducible AME states as the basic elements to construct AME states with high local dimensions.
Then we investigate the tripartite heterogeneous systems whose local dimensions are $l,m,n$, with $3\leq l<m<n\leq m+l-1$. We show the existence of AME states in such heterogeneous systems is related to a kind of arrays called magic solution array. We further identify the AME states in which kinds of heterogeneous systems are irreducible. In addition, we propose a method to construct $k$-uniform states of more parties using two existing AME states. We also build the connection between heterogeneous AME states and multi-isometry matrices, and indicate an application in quantum steering.
\end{abstract}

\date{\today}

\maketitle



\section{Introduction}
\label{sec:intro}

Quantum correlations play a central role in the foundation of quantum mechanics, kinds of quantum information processing, and the physics of strongly correlated systems \cite{marginallin14}. The nonclassicality of quantum correlations, in particular entanglement, challenges our understanding of the relation between local and global properties of quantum states.
For instance, the Bell state is a pure two-qubit state while each of its reduced density matrices is maximally mixed. It implies that even if we have the complete knowledge of the global state, the entropy of the reduced state can be the maximum. 
There is a class of pure states called absolutely maximally entangled (AME) states which share the similar property with the Bell state. Denote by $\cH_d$ a $d$-dimensional Hilbert space. A system is homogeneous if its local dimensions are equal, e.g., $\cH_d^{\ox n}$. Suppose $\ket{\phi}\in\cH_d^{\ox n}$is an $n$-partite pure state. We call $\ket{\phi}$ an AME state if each marginal of $\lfloor\frac n2\rfloor$ parties is maximally mixed, where $\lfloor\cdot\rfloor$ is the floor function. For any AME state it isn't separable in any bipartition, and thus is a genuinely multipartite entangled (GME) state. Genuine entanglement, as a kind of special multipartite entanglement, is regarded as the most important resource, and has been used in various experiments \cite{gedense06,sdsexge09,8ghz11}. Hence, the characterization of multipartite entanglement has been widely investigated \cite{gmecri10,pmix10,ylge2019}. 

In recent years, AME states in homogeneous systems have aroused great interest due to the close connection between AME states and quantum error correction codes (QECCs) \cite{ameqecc04,ameqecc07,ameqecc2012,ameqecc2013,ameqecc2018,Huber_2018,ameqecc2019}. It is known that an AME state in the homogeneous system $\cH_d^{\ox n}$ is one-one corresponding to a QECC which encodes messages in an alphabet consisting of $d$ letters \cite{ameqecc04}. Moreover, AME states in homogeneous systems have been shown to be a resource for open-destination and parallel teleportation \cite{ameqecc2012}, for threshold quantum secret sharing schemes \cite{ameqecc2013} and can be used to design holographic quantum codes \cite{holographic2015}. In recent work the authors design a series of quantum circuits that generate absolute maximally entangled (AME) states to benchmark a quantum computer \cite{amecircuits2019}. 
Unfortunately, not all homogeneous systems contain AME states. For example, it has been shown that the AME states of seven qubits do not exist \cite{sevenqubitsame}. This key result completes the study on the existence of AME state of $n$ qubits. In an $n$-qubit system the AME states only exist when $n=2,3,5,6$ \cite{Huber_2018,amenote2018,ametable}. However, when the local dimensions are greater than $2$, it is still unknown whether there exist AME states in several homogeneous systems \cite{ametable}. 

There is no universal approach to construct AME states in homogeneous systems. Goyeneche \etal~ proposed a method to construct AME states in homogeneous systems by building the connection with orthogonal arrays (OAs) \cite{ameoa2014}. They proved that an irredundant orthogonal array (IrOA) is corresponding to an AME state in a homogeneous system. Then much effort has been done in this way to construct new AME states \cite{amemultiunitary,ameoa2018,kuniformoa2019,ameiroa2019}. Nevertheless, there is no efficient way yet to determine the existence of AME states if there exists no corresponding IrOA. 
By relaxing the restriction on AME states, we obtain a generalization of AME states called $k$-uniform states. A pure $n$-partite state is called $k$-uniform if every density matrix reduced to $k(\leq \lfloor\frac n2\rfloor)$ parties is maximally mixed. From the perspective of information theory, a $k$-uniform state in $\cH_d^{\ox n}$ has the property that all information about the system is lost after removal of $n-k$ or more qudits. It has been shown that $k$-uniform states can be constructed from graph states \cite{amegraph2013}, orthogonal arrays \cite{ameoa2014}, mutually orthogonal Latin squares and Latin cubes \cite{amemultiunitary}, and symmetric matrices \cite{kf2017}.   

In a practical quantum information processing, one may need to deal with a non-homogeneous quantum setup i.e., local dimensions of the quantum system are mixed. For example, the physical systems for encoding may have different numbers of energy levels. We call such a non-homogeneous system a heterogeneous one. Thus, it is natural to ask whether there exist AME states in heterogeneous systems. It is essential to study AME states in heterogeneous systems, since they are related to quantum error correcting codes over mixed alphabets \cite{mixedqecc13,ameshetero16}. Suppose $\ket{\psi}$ is a pure $N$-partite state in the heterogeneous system $\cH_{d_1}^{\ox n_1}\ox\cH_{d_2}^{\ox n_2}\ox\cdots\ox \cH_{d_l}^{\ox n_l}$, where $N=\sum_{i=1}^l n_i$. There are two definitions for $\ket{\psi}$ to be an AME state. One directly  follows from the definition of AME states in homogeneous systems \cite{ameshetero16}. It requires that every marginal of $\lfloor \frac{N}{2}\rfloor$ parties is maximally mixed. The other requires that every subsystem whose dimension is not larger than that of its complement must be maximally mixed \cite[Sec. 10]{Huber_2018}. In this work we select the former as the standard definition of AME states in heterogeneous systems.

Since the local dimensions are mixed, it significantly increases the difficulty of characterizing multipartite entanglement \cite{2mnlin06,mulitdiment2013}, and thus the experimental realization of GME states in heterogeneous systems \cite{233ges2016,twistedphotons2018}. Recently, a three-partite genuinely entangled state composed of one qubit and two qutrits has been experimentally realized \cite{233ges2016}. This is a remarkable step forward in generating GME states in a heterogeneous system. This experimental achievement motivates us to study the existence of AME states in heterogeneous systems. It is much more challenging to construct AME states in heterogeneous systems than in homogeneous systems because the heterogeneous systems are unruly and lack of efficient mathematical tools. There are only a few results on AME states in heterogeneous systems. Goyeneche \etal~ presented the concepts of mixed orthogonal arrays (MOAs) and irredundant mixed orthogonal arrays (IrMOAs) based on the concepts of OAs and IrOAs \cite{ameshetero16}. By the tool of MOA and IrMOA they constructed several concrete AME states in multipartite heterogeneous systems. Felix Huber \etal~ constructed a mixed-dimensional AME state in the system $2\times 3\times 3\times 3$ under the second definition above-mentioned \cite{Huber_2018}. Then in Ref. \cite[Sec. 5(b)]{amealgo2018} the authors numerically constructed four-partite mixed-dimensional AME state in the systems with maximal local dimension 4.

In this paper we focus on the construction of AME states in heterogeneous systems, especially tripartite heterogeneous AME states. Several properties of bipartite entanglement are well understood even if the bipartite system is heterogeneous. Nevertheless, many problems for tripartite systems become intractable. Although tripartite AME states in homogeneous systems are obvious, i.e., the GHZ state for any local dimension, it is not easy to construct tripartite AME states in heterogeneous systems, and not all tripartite heterogeneous systems contain AME states. The main aim of this work is to determine the existence of AME states in various tripartite heterogeneous systems, and construct several concrete AME states.

We first introduce the concept of irreducible AME states. The irreducible AME states cannot be written as the tensor product of two AME states. Thus irreducible AME states are the basic elements to generate AME states with high local dimensions. The existence of tripartite AME states in the heterogeneous systems possessing one qubit subsystem has been completely characterized by Lemma \ref{thm:2xmxn}. We present an alternative choice to prove Lemma \ref{thm:2xmxn} directly in Appendix \ref{sec:dirpf}. Since every AME state in systems $2\times kl\times (kl+l)$ is reducible, we also propose an approach to generate such tripartite AME states by Fig. \ref{fig:pro}. In Lemma \ref{le:trilmnch} we show the existence of AME states in several special heterogeneous systems. Then we mainly investigate the tripartite AME states in the heterogeneous systems whose local dimensions are $l,m,n$, with $3\leq l<m<n\leq m+l-1$. We propose a novel array called magic solution array (MSA). The restrictions for the elements of an MSA are similar to magic squares. In Theorem \ref{le:eslmml1} we show the construction of tripartite AME states in the aforesaid heterogeneous systems is closely related to the MSAs. Moreover, in Theorem \ref{le:irresuff} we present sufficient conditions for multipartite heterogeneous systems such that the AME states in them are irreducible. Finally we show our results are useful for the construction of $k$-uniform states possessing more parties, and quantum steering. In addition, we build the connection between AME states in heterogeneous systems and multi-isometry matrices.

The remainder of the paper is organized as follows. In Sec. \ref{sec:pre} we first clarify the notations in the whole paper, second formulate the mathematical definition of AME states in heterogeneous systems, and finally introduce the concept of irreducible AME states. In Sec. \ref{sec:3t} we determine the existence of tripartite AME states in various heterogeneous systems. In Sec. \ref{sec:irreames} we identify the AME states in which kinds of multipartite heterogeneous systems are irreducible. In Sec. \ref{sec:app} we show some applications of our results. Finally, the concluding remarks are given in Sec. \ref{sec:con}.



\section{Notations and Preliminaries}
\label{sec:pre}

Here we introduce the notations that will be used throughout the paper. Denote by $\lfloor\cdot\rfloor$ the interger part, by $\bbZ^{+}$ the set of positive integers, and by $\cU(d)$ the set of $d\times d$ unitary matrices. We denote a $d$-dimensional Hilbert space by $\cH_{d}$. An $n$-partite quantum system is represented by $\cH_{d_1}\ox\cH_{d_2}\ox\cdots\ox \cH_{d_n}$. For brevity we will refer to such a heterogeneous system as $d_1\times d_2\times \cdots\times d_n$.


The AME states are typically defined in homogeneous systems of $n$-qudit, i.e., $\cH_d^{\ox n}$. We first present the definition of AME states in homogeneous systems as basics. 
\begin{definition}
\label{def:ames}
Suppose $\ket{\Phi}$ is a pure $n$-partite state in the Hilbert space $\cH_d^{\ox n}$. Denote by $\cA_d(n)$ the set of AME states in $\cH_d^{\ox n}$. 

(i) $\ket{\Phi}$ is called $k$-uniform, if each $k$-partite marginal of $\proj{\Phi}$ is maximally mixed.

(ii) $\ket{\Phi}\in \cA_d(n)$ if it is $\lfloor\frac{n}{2}\rfloor$-uniform, i.e., each $\lfloor\frac{n}{2}\rfloor$-partite marginal of $\proj{\Phi}$ is maximally mixed.
\end{definition}


We next generalize the concept of AME states in homogeneous systems to that of AME states in heterogeneous systems. To be precise we shall present the mathematical definition as follows.

\begin{definition}
\label{def:amesmix}
Suppose $\ket{\Psi}$ is a pure $n$-partite state in the system $d_1\times d_2\times \cdots\times d_n$ where $d_i$'s are not equal. Denote by $\cA(d_1,d_2,\cdots,d_n)$ the set of AME states in this system.

(i) $\ket{\Psi}$ is called $k$-uniform, if each $k$-partite marginal of $\proj{\Psi}$ is maximally mixed, i.e., for any $\{j_1,\cdots,j_k\}\subset\{1,\cdots,n\}$,
\beq
\label{eq:defkunimix}
\tr_{\{j_1,\cdots,j_k\}^c}\proj{\Psi}=\frac{1}{d_{j_1}\times\cdots\times d_{j_k}} I_{d_{j_1}\times\cdots\times d_{j_k}}.
\eeq

(ii) $\ket{\Psi}\in \cA(d_1,d_2,\cdots,d_n)$, if it is $\lfloor\frac{n}{2}\rfloor$-uniform.
\end{definition}

Any $n$-partite AME state with even $n$ can be indeed regarded as a bipartite maximally entangled state with respect to any bipartition. However, it is impossible for the heterogeneous systems by Definition \ref{def:amesmix}. Thus there is no AME states in heterogeneous systems of even number of parties by Definition \ref{def:amesmix}. In the following when considering the existence of AME states in heterogeneous systems, we shall suppose the systems possessing odd number of parties.
In general the following lemma presents a necessary condition for the heterogeneous systems which contain $k$-uniform states.

\begin{lemma}
\label{le:dimensioncon}
A $k$-uniform state does not exist if the product of the size of $k$ local Hilbert spaces is larger than the dimension of the complementary system.
\end{lemma}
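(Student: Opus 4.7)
The plan is to leverage the fact that reduced density matrices of a pure state on complementary subsystems share the same nonzero spectrum, via the Schmidt decomposition. The argument is by contradiction: suppose $\ket{\Psi}$ is a $k$-uniform state in $d_1\times d_2\times\cdots\times d_n$, and let $S\subset\{1,\ldots,n\}$ be a subset of exactly $k$ indices for which $d_S := \prod_{i\in S} d_i > d_{S^c} := \prod_{i\notin S} d_i$. (The hypothesis of the lemma is that such an $S$ exists.)

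First I would apply the Schmidt decomposition of $\ket{\Psi}$ across the bipartition $(S,S^c)$. Writing $\rho_S := \tr_{S^c}\proj{\Psi}$ and $\rho_{S^c} := \tr_{S}\proj{\Psi}$, this immediately yields
\beq
\rank(\rho_S) \;=\; \rank(\rho_{S^c}) \;\leq\; \min(d_S,\, d_{S^c}) \;=\; d_{S^c}.
\eeq
Second, I would invoke the $k$-uniform hypothesis applied to the subset $S$: Definition \ref{def:amesmix}(i) forces $\rho_S = I_{d_S}/d_S$, which is of full rank $d_S$. Combining the two steps gives $d_S = \rank(\rho_S) \leq d_{S^c}$, contradicting $d_S > d_{S^c}$. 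Hence no $k$-uniform state can exist in the given system.

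There is no real obstacle in this proof; it is essentially a one-line consequence of the Schmidt decomposition together with the maximally mixed condition built into Definition \ref{def:amesmix}. The only thing worth double-checking is the bookkeeping: one must be careful that the statement ``product of the sizes of $k$ local Hilbert spaces is larger than the dimension of the complementary system'' is read existentially (some $k$-subset violates the inequality), which is precisely what is needed to obstruct $k$-uniformity, since $k$-uniformity demands the maximally mixed condition for \emph{every} $k$-subset.
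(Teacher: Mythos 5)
Your argument is correct. Note that the paper itself states Lemma \ref{le:dimensioncon} without any proof, treating it as a known fact, so there is no in-paper argument to compare against; the Schmidt-decomposition argument you supply (equality of nonzero spectra of $\rho_S$ and $\rho_{S^c}$, hence $\rank(\rho_S)\leq d_{S^c}$, against the full rank $d_S$ forced by Definition \ref{def:amesmix}(i)) is exactly the standard justification and is sound. Your remark about reading the hypothesis existentially is also the right reading, since $k$-uniformity quantifies over all $k$-subsets and a single offending subset suffices for the obstruction.
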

In Ref. \cite{Huber_2018}, the authors constructed a four-partite AME state in the system $2\times3\times3\times3$ based on a relaxed restriction. 



Suppose $\ket{\psi}_{A_1\cdots A_m}\in\cA(k_1,k_2,\cdots, k_m)$ and $\ket{\phi}_{B_1\cdots B_m}\in\cA(l_1,l_2,\cdots, l_m)$. One can verify that $\ket{\psi}\ox\ket{\phi}_{(A_1B_1)\cdots(A_mB_m)}$ is an AME state in $\cH_{k_1l_1}\ox \cdots\ox\cH_{k_ml_m}$, i.e., $\ket{\psi}\ox\ket{\phi}_{(A_1B_1)\cdots(A_mB_m)}\in\cA\big((k_1l_1),(k_2l_2),\cdots,(k_ml_m)\big)$. This property presents a method to construct an AME state in the system with larger local dimensions from two AME states in the systems with smaller local dimensions. 
Hence, it is essential to construct AME states $\ket{\ps}_{A_1\cdots A_m}$ and $\ket{\ph}_{B_1\cdots B_m}$ which have smaller local dimensions. For this purpose we establish the definition of irreducible AME states as follows. 

\begin{definition}
\label{df:irreducible}
Suppose $\ket{\ps}\in\cA(d_1,d_2,\cdots, d_n)$. Then $\ket{\psi}$ is called reducible AME state if there exists a local unitary $\ox_{i=1}^n U_i \in \cU(d_1)\ox \cU(d_2)\ox \cdots \ox \cU(d_n)$ such that $(\ox_{i=1}^n U_i)\ket{\ps}$ is the tensor product of two $n$-partite AME state with smaller local dimensions. Otherwise, we say that $\ket{\ps}$ is an irreducible AME state. 	
\end{definition}

For example, the three-qubit GHZ state $\ket{GHZ}={1\over\sqrt2}(\ket{000}+\ket{111})\in \cA_2(3)$ is an irreducible AME state,  while ${1\over2}(\ket{000}+\ket{111}+\ket{222}+\ket{333})\in\cA_4(3)$ is a reducible AME state as it is the tensor product of two GHZ states. 
The concept of irreducible AME states is very useful in experiments, since the tensor product which combines the corresponding subsystems of two states can be well realized by experiments.

In Ref. \cite{locallymaximally2019} the authors consider locally maximally entangled (LME) states, i.e., $1$-uniform states in kinds of heterogeneous systems. It follows that LME states are equivalent to AME states for tripartite heterogeneous systems. The existence of AME states in systems $2\times m\times (m+n)$ and $3\times m\times (m+n)$ has been characterized by \cite[Fig. 2]{locallymaximally2019}. In particular, the existence of AME states in systems $2\times m\times (m+n)$ can be characterized as follows.

\begin{lemma}\cite{locallymaximally2019}
\label{thm:2xmxn}
There exist tripartite AME states in systems $2\times m\times (m+n)$ if and only if $n=0$, or $m=kn, ~\forall k\geq 1$. 	
\end{lemma}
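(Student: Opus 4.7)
The plan is to treat the two directions separately. For the \emph{if} direction, I would give explicit constructions. The case $n = 0$ reduces to the standard observation that
\[
|\Psi\rangle = \frac{1}{\sqrt{2m}}\sum_{j=0}^{m-1}\bigl(|0,j,j\rangle + |1,j,j\oplus_m 1\rangle\bigr)
\]
belongs to $\cA(2,m,m)$, where $\oplus_m$ denotes addition modulo $m$. For $m = kn$ with $k,n \geq 1$ I would take
\[
|\Psi\rangle = \sum_{a\in\{0,1\}}\sum_{b=0}^{m-1}\sqrt{p_{a,b}}\,|a\rangle|b\rangle|f(a,b)\rangle,
\]
with $f(0,b)=b$, $f(1,b)=b+n$, and block-constant weights: for $b\in[jn,(j+1)n-1]$ and $j=0,\ldots,k-1$,
\[
p_{0,b}=\frac{k-j}{k(k+1)n}, \qquad p_{1,b}=\frac{j+1}{k(k+1)n}.
\]
A direct check shows that the column sums $p_{0,b}+p_{1,b}=\tfrac{1}{m}$, the row sums $\sum_b p_{a,b}=\tfrac{1}{2}$, and the fibre sums $\sum_{f(a,b)=c}p_{a,b}=\tfrac{1}{m+n}$ yield exactly the three required maximally mixed single-party marginals, while the injectivity of $f(a,\cdot)$ together with the disjointness $f(0,b)\neq f(1,b)$ eliminate all off-diagonal cross-terms.

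For the \emph{only if} direction, suppose $|\Psi\rangle\in\cA(2,m,m+n)$ with $n\geq 1$. Writing $|\Psi\rangle=\tfrac{1}{\sqrt{2}}(|0\rangle|\phi_0\rangle+|1\rangle|\phi_1\rangle)$ is forced by $\rho_A=I_2/2$, giving $\langle\phi_0|\phi_1\rangle=0$. Set $\rho_B^{(i)}=\tr_C|\phi_i\rangle\langle\phi_i|$ and $\rho_C^{(i)}=\tr_B|\phi_i\rangle\langle\phi_i|$; the remaining AME conditions read $\rho_B^{(0)}+\rho_B^{(1)}=\tfrac{2}{m}I_m$ and $\rho_C^{(0)}+\rho_C^{(1)}=\tfrac{2}{m+n}I_{m+n}$. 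Each such pair of Hermitian operators is simultaneously diagonalisable because its sum is a scalar, so every Schmidt coefficient squared of $|\phi_i\rangle$ lies in $[0,\tfrac{2}{m+n}]$. This already forces the Schmidt ranks to satisfy $r_0=r_1=m$: if instead $r_1<m$, then $\rho_B^{(0)}$ would exhibit an eigenvalue equal to $\tfrac{2}{m}$, strictly larger than $\tfrac{2}{m+n}$, a contradiction.

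The heart of the argument is a recursive bookkeeping on the multisets $R_i$ of Schmidt coefficients squared of $|\phi_i\rangle$, each of size $m$. Set $\alpha=\tfrac{2}{m+n}$ and $\beta=\tfrac{2}{m}$. From the $C$-constraint, $R_0$ and $R_1$ each contain $n$ copies of $\alpha$; from the $B$-constraint there is a perfect matching $R_0\leftrightarrow R_1$ with pair-sum $\beta$. The $n$ copies of $\alpha$ in $R_0$ force $n$ copies of $\beta-\alpha$ in $R_1$ (via $B$), and then (provided $m>n$) $n$ copies of $2\alpha-\beta$ in $R_0$ (via $C$), and so on. Iterating produces
\[
a_{2k}=\frac{2(m-kn)}{m(m+n)}\in R_0, \qquad a_{2k+1}=\frac{2(k+1)n}{m(m+n)}\in R_1,
\]
each with multiplicity $n$. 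The bound $a_j\leq\alpha$ together with $a_j\geq 0$ forces the recursion to close precisely when $a_{2K}=0$ for some $K\in\bbZ^+$, equivalently $m=Kn$; otherwise the odd-indexed term $a_{2\lfloor m/n\rfloor+1}$ strictly exceeds $\alpha$, violating the Schmidt bound. Hence $n\mid m$, as required.

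The principal obstacle is the combinatorial bookkeeping: one must justify that each step of the recursion really produces $n$ fresh copies and that different generations do not collide prematurely in a way that would close the iteration early. The values $a_0,a_2,\ldots,a_{2K-2}$ are pairwise distinct (and likewise for the odd-indexed sequence) because $n>0$, so no absorption occurs within a single $R_i$. The degenerate case $m=n$, for which $\beta-\alpha=\alpha$ and the iteration stabilises at its very first step, should be handled separately as the $k=1$ instance of the statement.
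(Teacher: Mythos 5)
Your proof is correct and follows essentially the same route as the paper's Appendix~\ref{sec:dirpf}: the same normal form $\frac{1}{\sqrt 2}(\ket{0}\ket{\phi_0}+\ket{1}\ket{\phi_1})$, the same identification of $n$ Schmidt coefficients equal to $\tfrac{2}{m+n}$, the same alternating recursion between the $B$- and $C$-constraints forcing $n\mid m$ (your $a_{2k},a_{2k+1}$ are exactly the paper's $x_{jn}^2,y_{jn}^2$), and an identical explicit construction for $m=kn$. If anything, your bookkeeping via simultaneous diagonalization and the monotonicity of the two subsequences is slightly more careful than the paper's ``without loss of generality'' steps.
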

The authors derived Lemma \ref{thm:2xmxn} using the geometric invariant theory in Ref. \cite{locallymaximally2019}. We present an alternative proof in Appendix \ref{sec:dirpf} to show it directly.
Note that every tripartite AME states in the system $2\times kl\times (kl+l)$ is reducible. 
In Fig. \ref{fig:pro} we illustrate how to generate an AME state in the system $2\times kl\times (kl+l)$ with an irreducible AME state and a bipartite maximally entangled state. 
Nevertheless, we will show there exist both reducible AME states and irreducible AME states in systems $2\times m\times m$ in Sec. \ref{sec:irreames}.


\begin{figure}[ht]
\centering
\includegraphics[width=3in]{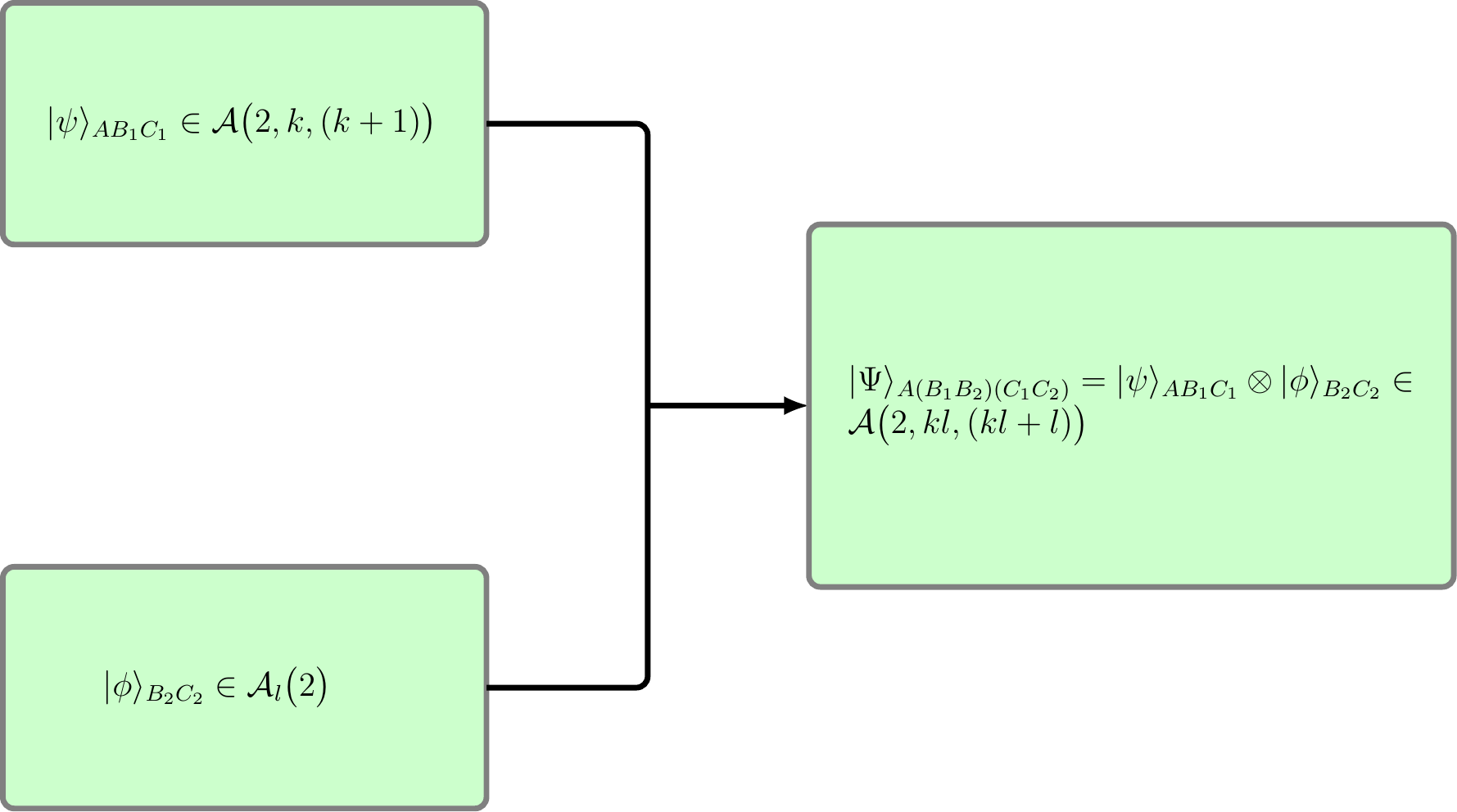}
\caption{\footnotesize{$\cA(2,k,(k+1))$ and $\cA_l(2)$ are the sets of AME states in systems $2\times k\times (k+1)$ and $l\times l$ respectively. The existence of $\ket{\psi}_{AB_1C_1}$ follows from Lemma \ref{thm:2xmxn}, and $\ket{\phi}_{B_2C_2}$ is indeed a bipartite maximally entangled state. Then $\ket{\Psi}_{A(B_1B_2)(C_1C_2)}$ is a tripartite AME state shared with $A,(B_1B_2), (C_1C_2)$. The tensor product combining the corresponding subsystems $B_1,B_2$, and $C_1,C_2$ has been widely used in experiments.}}
\label{fig:pro}
\end{figure}

\section{Tripartite AME states in heterogeneous systems}
\label{sec:3t}


In this section we investigate the tripartite AME states in heterogeneous systems further. The tripartite heterogeneous systems are the first non-trivial heterogeneous systems when considering AME states. It is reasonable to believe that the tripartite AME states in heterogeneous systems can be experimentally realized in the near future due to the experimental achievement of producing a tripartite genuinely entangled state in the heterogeneous system $2\times 3\times 3$ \cite{233ges2016}. The following lemma shows the existence of tripartite AME states in some special heterogeneous systems.

\begin{lemma}
\label{le:trilmnch}
(i) There exist AME states in systems $m\times m\times n, ~\forall n\leq m^2$.

(ii) Let $l<m$. There exist AME states in systems $l\times m\times n$ if $n=km\leq lm$.

(iii) Suppose both $\ket{\psi}_{ABC}$ and $\ket{\phi}_{ABC}$ belong to $\cA(d_A,d_B,d_C)$ where $d_A,d_B,d_C$ are the dimensions of the subsystems $A, B,C$ respectively. Define $\ket{\psi}\oplus_{AB}\ket{\phi}:=\frac{1}{\sqrt{2}}(\ket{00}_{AB}\ket{\psi}+\ket{11}_{AB}\ket{\phi})$. Then $\ket{\psi}\oplus_{AB}\ket{\phi}$ is a tripartite AME states in the system $(2d_A)\times(2d_B)\times(d_C)$.
\end{lemma}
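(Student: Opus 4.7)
Since the AME condition for a tripartite state reduces to requiring that every single-party marginal be maximally mixed (as $\lfloor 3/2\rfloor=1$), all three parts admit explicit constructions followed by routine marginal checks. For part (i), the plan is to put $n$ Hilbert-Schmidt-orthonormal unitaries on the $AB$ factor and record an orthonormal index on the $C$ factor. Pick $\{U_0,\ldots,U_{n-1}\}\subset\cU(m)$ with $\tr(U_i^\dagger U_j)=m\delta_{ij}$---such a family exists whenever $n\leq m^2$ since the Heisenberg-Weyl basis supplies $m^2$ such unitaries---and set
$$\ket{\psi}=\frac{1}{\sqrt{n}}\sum_{i=0}^{n-1}\ket{U_i}_{AB}\ket{i}_C,\qquad \ket{U_i}_{AB}=\frac{1}{\sqrt{m}}\sum_{j,k=0}^{m-1}(U_i)_{jk}\ket{j}_A\ket{k}_B.$$
Orthonormality of the $\ket{U_i}$'s yields $\rho_C=I_n/n$, while unitarity gives $\tr_B\proj{U_i}=U_iU_i^\dagger/m=I_m/m$ and likewise for $\tr_A$, whence $\rho_A=\rho_B=I_m/m$.

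For part (ii), the plan is to split $C\cong\cH_k\ox\cH_m$ as $C_1C_2$ and attach a tight-frame label on $C_1$, exploiting the room afforded by $k\leq l$. Choose $\{U_i\}_{i=0}^{l-1}\subset\cU(m)$ orthonormal (possible since $l<m\leq m^2$) together with a unit-norm tight frame $\{\ket{f_i}\}_{i=0}^{l-1}$ in $\cH_k$ with frame constant $l/k$, i.e., $\sum_i\ketbra{f_i}{f_i}=(l/k)I_k$; an explicit choice is the truncated DFT family $\ket{f_i}=k^{-1/2}(1,\omega^i,\ldots,\omega^{(k-1)i})^T$ with $\omega=e^{2\pi\iu/l}$, for which $k\leq l$ ensures both the unit-norm property and the tight-frame identity via the geometric sum $\sum_{i=0}^{l-1}\omega^{(a-b)i}=l\delta_{ab}$ valid for $|a-b|<l$. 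Define
$$\ket{\psi}=\frac{1}{\sqrt{l}}\sum_{i=0}^{l-1}\ket{i}_A\ox\ket{f_i}_{C_1}\ox\ket{U_i}_{BC_2}.$$
Then $\rho_A=I_l/l$ follows from $\braket{f_{i'}}{f_i}\braket{U_{i'}}{U_i}=\delta_{ii'}$, $\rho_B=I_m/m$ from unitarity of each $U_i$, and $\rho_{C_1C_2}=(l^{-1}\sum_i\ketbra{f_i}{f_i})\ox(I_m/m)=(I_k/k)\ox(I_m/m)=I_{km}/(km)$ from the tight-frame condition combined with unitarity.

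For part (iii), the plan is direct verification exploiting the block structure: the two branches $\ket{00}_{AB}\ket{\psi}$ and $\ket{11}_{AB}\ket{\phi}$ are tagged by orthogonal flag states on the two appended qubits, so when computing any single-party marginal the cross terms always pick up a factor $\braket{1}{0}=0$ from at least one of the two traced-out flag qubits. Thus $\rho_{q_AA}=\frac12(\ketbra{0}{0}_{q_A}\ox\rho_A^\psi+\ketbra{1}{1}_{q_A}\ox\rho_A^\phi)$, which equals $(I_2/2)\ox(I_{d_A}/d_A)=I_{2d_A}/(2d_A)$ by the AME assumption on $\ket{\psi},\ket{\phi}$; similarly $\rho_{q_BB}=I_{2d_B}/(2d_B)$, and $\rho_C=\frac12(\rho_C^\psi+\rho_C^\phi)=I_{d_C}/d_C$. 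I expect part (ii) to be the main obstacle: the key insight is that $n=km$ is the right factorization of $C$ and that a tight frame in dimension $k$ with frame constant $l/k$---which cannot come from an orthonormal basis when $k<l$---is the correct device to equalize the $C_1$-marginal while keeping the $A$-branches orthonormal; once this frame is identified the remaining marginal computation is mechanical.
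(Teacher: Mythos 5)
Your proof is correct in all three parts. Parts (i) and (iii) coincide with the paper's argument: your Hilbert--Schmidt-orthonormal unitaries $\{U_i\}$ are exactly the orthonormal basis of maximally entangled states $\{\ket{\Psi_j}\}$ the paper uses for (i), and your flag-qubit bookkeeping for (iii) is the paper's direct marginal computation (stated more carefully than the paper, whose displayed formulas for $\rho_A,\rho_B$ contain typographical slips). Part (ii) is where you genuinely diverge. The paper keeps $C$ as a bare index register of dimension $km$ and builds $km$ mutually orthogonal partially-entangled vectors $\ket{\Psi_{t,s}}=(D_t\ox P_s)\ket{\O_l}$ across $A\ox B$, where $P_s$ are rank-$l$ shift embeddings $\cH_l\to\cH_m$; the burden there is verifying orthogonality of the $\ket{\Psi_{t,s}}$ and that averaging the shifted supports over $s$ flattens $\rho_B$. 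You instead factor $C\cong\cH_k\ox\cH_m$, place full maximally entangled states across $B$ and the $\cH_m$ factor of $C$, index the branches on $A$, and use a unit-norm tight frame with frame constant $l/k$ on the $\cH_k$ factor to flatten that marginal. Your route buys modularity: orthogonality of branches is automatic from $\braket{U_{i'}}{U_i}=\delta_{ii'}$, and the only nontrivial identity is the standard truncated-DFT tight-frame relation; the paper's route avoids assuming $n$ factors as $k\cdot m$ explicitly in the state's tensor structure, but for this lemma that factorization is given, so nothing is lost. Both constructions are valid and yield AME states in $l\times m\times km$ for $k\le l$.
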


\begin{proof}
(i) First from Lemma \ref{le:dimensioncon} the existence of AME states in systems $m\times m\times n$ requires $n\leq m^2$. Then one can construct an orthonormal basis $\{\ket{\Psi_j}\}_{j=0}^{m^2}$ of the system $m\times m$ which consists of $m^2$ maximally entangled states. Let 
\beq
\label{eq:mxmxn}
\ket{\Psi}=\frac{1}{\sqrt{n}}\sum_{j=0}^{n-1} \ket{\Psi_{j},j}.
\eeq 
By computing we have
\begin{eqnarray}
\notag
&&\rho_1=\frac{1}{n}\sum_{j=0}^{n-1}\tr_2\proj{\Psi_j}=\frac{1}{m}I_m,\\
\notag
&&\rho_2=\frac{1}{n}\sum_{j=0}^{n-1}\tr_1\proj{\Psi_j}=\frac{1}{m}I_m,\\
\notag
&&\rho_3=\frac{1}{n}\sum_{j=0}^{n-1}\proj{j}=\frac{1}{n}I_n,
\end{eqnarray}
where $\rho_i$ is the reduction of $i$th party.
By definition $\ket{\Psi}$ is a tripartite AME state in the system $m\times m\times n$.

(ii) Let $\o=e^{\frac{2\pi i}{l}}$, and $\ket{\O_l}=\frac{1}{\sqrt{l}}\sum_{x=0}^{l-1}\ket{xx}$. We construct the set of $l\times l$ diagonal unitary matrices
\beq
\label{eq:diagunitary}
D_t=\diag(1, \o^t,\cdots \o^{(l-1)t}),
\eeq
and the set of $m\times l$ matrices
\beq
\label{eq:permutation}
P_s=\sum_{j=0}^{l-1}\ketbra{j+s\mod m}{j}.
\eeq
Then we obtain the following states.
\beq
\label{eq:meslm}
\ket{\Psi_{t,s}}=(D_t\ox P_s)\ket{\O_l},
\eeq
where $t=0,\cdots,k-1$, and $s=0,\cdots,m-1$. One can verify $\ket{\Psi_{t,s}}, ~t=0,\cdots,k-1,~s=0,\cdots,m-1$ are orthogonal. Let 
\beq
\label{eq:lmkm}
\ket{\Psi}=\frac{1}{\sqrt{km}}\sum_{t=0}^{k-1}\sum_{s=0}^{m-1} \ket{\Psi_{t,s},mt+s}.
\eeq
One can verify every reduction of one party is maximally mixed. Therefore, $\ket{\Psi}$
is a tripartite AME state in the system $l\times m\times km$ with $0<k\leq l$.

(iii) Suppose $\rho_{ABC}$ is the density matrix of $\ket{\psi}\oplus_{AB}\ket{\phi}$. By computing it follows that
\begin{eqnarray}
\notag
&&\rho_A=\frac{1}{2}(\proj{0}_A\proj{\psi}_A+\proj{1}_A\proj{\psi}_A)=\frac{1}{2d_A}I_{2d_A},  \\
\notag
&&\rho_B=\frac{1}{2}(\proj{0}_B\proj{\psi}_B+\proj{1}_B\proj{\psi}_B)=\frac{1}{2d_B}I_{2d_B},  \\
\notag
&&\rho_C=\frac{1}{2}(\proj{\psi}_C+\proj{\psi}_C)=\frac{1}{d_C}I_{d_C}. 
\end{eqnarray} 
Therefore, $\ket{\psi}\oplus_{AB}\ket{\phi}$ is a tripartite AME states in the system $(2d_A)\times(2d_B)\times(d_C)$.

This completes the proof.
\end{proof}

One can prepare some states in \eqref{eq:mxmxn} using the current techniques in experiments. For example Alice, Bob and Charlie want to prepare the target state
\begin{eqnarray}
\ket{\Psi}={1\over\sqrt2}
\bigg(
\ket{\Psi_1}\ket{0}
+ 
\ket{\Psi_2}\ket{1}	
\bigg).
\end{eqnarray}
For this purpose Alice, Bob and Charlie may first prepare the following state
\begin{eqnarray}
\ket{\Phi}={1\over\sqrt2}
\bigg(
{1\over\sqrt2}(\ket{00}+\ket{11})\ket{0}
+ 
{1\over\sqrt2}(\ket{01}+\ket{10})\ket{1}	
\bigg).
\end{eqnarray}
Here we propose a protocol to prepare the state $\ket{\Phi}$ as follows.
Alice and Charlie may prepare the Bell state $\ket{\a}_{AB}={1\over\sqrt2}(\ket{00}+\ket{11})_{AB}$, and similarly Bob and Charlie may also prepare the Bell state $\ket{\a}_{BC}={1\over\sqrt2}(\ket{00}+\ket{11})_{BC}$. By setting $jk:=2j+k$ on system C, we obtain the tripartite state $\ket{\ps}:=\ket{\a}\otimes\ket{\b}={1\over2}(\ket{000}+\ket{011}+\ket{102}+\ket{113})$. Next Charlie measures system $C$ using the POVM $\{\proj{0}+\proj{3},\proj{1}+\proj{2}\}$. The result is that Charlie obtains the state ${1\over\sqrt2}(\ket{000}+\ket{113})$ or ${1\over\sqrt2}(\ket{011}+\ket{102})$. In either case Charlie may inform Alice and Bob of his measurement result, so that the latter may perform local unitary operations, so that the final tripartite state is the standard three-qubit GHZ state. This state is equivalent to the target state $\ket{\Phi}$ by local unitary operations. After preparing the state $\ket{\Phi}$ one may obtain the target state by $\ket{\Psi}=\ket{\Phi}\ox\ket{\psi}_{A'B'}$, where $\ket{\psi}_{A'B'}$ is a bipartite maximally entangled state in the composite system of Alice and Bob. Similarly, one may experimentally prepare more complex states in \eqref{eq:mxmxn}.

In the following we investigate the existence of tripartite AME states in general systems as $l\times m\times n$.
We first propose a kind of arrays whose restrictions are similar to magic squares. We call an $l\times m$ array as a magic solution array (MSA) if its elements $y_{k,j}, ~0\leq k\leq l-1, ~0\leq j\leq m-1$, constitute a nonnegative solution of the following system of linear equations:
\begin{numcases}{}
\label{eq:lmm+l-1rhoa-10}
\sum_{j=0}^{m-1} y_{k,j}=1, \quad \forall 0\leq k\leq l-1,  \\
\label{eq:lmm+l-1rhob-10}
\sum_{k=0}^{l-1} y_{k,j}=\frac{l}{m}, \quad \forall 0\leq j\leq m-1, \\
\label{eq:lmm+l-1rhoc-10}
\sum_{(k+j)\mod n=s} y_{k,j}=\frac{l}{n}, \quad \forall 0\leq s\leq n-1.
\end{numcases}

The following theorem shows that such MSAs are closely related to the construction of tripartite AME states in systems $l\times m\times n$.
\begin{theorem}
\label{le:eslmml1}
Suppose $3\leq l< m<n\leq m+l-1$. There exist AME states in the system $l\times m\times n$ if the magic solution array given by Eqs. \eqref{eq:lmm+l-1rhoa-10}-\eqref{eq:lmm+l-1rhoc-10} exists.
\end{theorem}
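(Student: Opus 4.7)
My strategy is to write down an explicit ansatz whose amplitudes are the square roots of the MSA entries and whose support lies along a ``cyclic-shift'' diagonal, and then to match the three constraint families of the MSA one-for-one with the three single-party reductions. Concretely, I propose the candidate
\beq
\ket{\Psi}_{ABC} \;=\; \frac{1}{\sqrt{l}}\sum_{k=0}^{l-1}\sum_{j=0}^{m-1}\sqrt{y_{k,j}}\;\ket{k}_A\ket{j}_B\ket{(k+j)\bmod n}_C,
\eeq
which is well-defined because $y_{k,j}\geq 0$ and because $(k+j)\bmod n\in\{0,\dots,n-1\}$ always labels a legitimate basis vector of $\cH_n$. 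Distinct pairs $(k,j)$ already give distinct $\ket{k}_A\ket{j}_B$, so the computational-basis triples appearing in $\ket{\Psi}$ are pairwise orthogonal, and normalization follows from Eq. \eqref{eq:lmm+l-1rhoa-10}: $\braket{\Psi}{\Psi}=\tfrac{1}{l}\sum_{k,j}y_{k,j}=\tfrac{1}{l}\sum_k 1=1$.

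Next I would verify that each single-party marginal is maximally mixed, which reduces to reading off the three MSA constraints. Tracing out $BC$ and applying Eq. \eqref{eq:lmm+l-1rhoa-10} gives $\rho_A=\tfrac{1}{l}\sum_k(\sum_j y_{k,j})\proj{k}=I_l/l$. Tracing out $AC$ and applying Eq. \eqref{eq:lmm+l-1rhob-10} gives $\rho_B=\tfrac{1}{l}\sum_j(\sum_k y_{k,j})\proj{j}=I_m/m$. Tracing out $AB$ and grouping summands according to the value $s=(k+j)\bmod n$, then applying Eq. \eqref{eq:lmm+l-1rhoc-10}, gives $\rho_C=\tfrac{1}{l}\sum_s(\sum_{(k+j)\bmod n=s}y_{k,j})\proj{s}=I_n/n$. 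Since $\lfloor 3/2\rfloor=1$, this is exactly the AME condition of Definition \ref{def:amesmix}, so $\ket{\Psi}\in\cA(l,m,n)$.

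There is no substantive obstacle once the correct ansatz is identified; the conceptual content lies entirely in recognizing that the three MSA constraint families were designed precisely to force the three one-party reductions of this cyclic-shift ansatz to be flat. The hypothesis $l<m<n\le m+l-1$ plays two auxiliary roles I would note explicitly: $l<m<n$ implies $k,j\in[0,n-1]$, so that $k\mapsto k\bmod n$ and $j\mapsto j\bmod n$ are injective and no unwanted interference arises when collapsing Kronecker deltas in the traces over $C$; and $n\le m+l-1$ guarantees that the sums indexed by $(k+j)\bmod n=s$ in Eq. \eqref{eq:lmm+l-1rhoc-10} are nonempty for every $s$, so the wrap-around third constraint is consistent with the first two. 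The only routine check I would include is the brief verification that these range conditions indeed suffice to collapse the off-diagonal terms in the partial traces.
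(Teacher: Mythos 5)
Your proposal is correct and follows essentially the same route as the paper: the closed-form ansatz $\frac{1}{\sqrt{l}}\sum_{k,j}\sqrt{y_{k,j}}\,\ket{k}\ket{j}\ket{(k+j)\bmod n}$ is exactly the state in Eq.~\eqref{eq:explmml1}, and the identification of the three MSA constraint families with the three single-party marginals matches Eqs.~\eqref{eq:lmm+l-1rhoa}--\eqref{eq:lmm+l-1rhoc}. Your explicit observation that $l<n$ and $m<n$ are what collapse the off-diagonal terms of $\rho_A$ and $\rho_B$ is a detail the paper leaves to the reader, but it is the same argument.
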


\begin{proof}
Suppose $\{y_{k,j}\}$ is a nonnegative solution of Eqs. \eqref{eq:lmm+l-1rhoa-10}-\eqref{eq:lmm+l-1rhoc-10}, where $0\leq k\leq l-1$, and $0\leq j\leq m-1$. Let $x_{k,j}=\sqrt{y_{k,j}}$. With these $x_{k,j}$'s we construct the following tripartite state $\ket{\psi}_{ABC}$ in the system $l\times m\times n$.
\begin{eqnarray}
\label{eq:explmml1}
&&\frac{1}{\sqrt{l}}\big[\ket{0}(\sum_{j=0}^{m-1}x_{0,j}\ket{j,j})+\ket{1}(\sum_{j=0}^{m-1}x_{1,j}\ket{j,j+1})
\\  \notag
&&+\cdots+\ket{n-m}(\sum_{j=0}^{m-1}x_{n-m,j}\ket{j,j+n-m})
\\ \notag
&&+\ket{n-m+1}(\sum_{j=0}^{m-1}x_{n-m+1,j}\ket{j,j+n-m+1\mod n})
\\ \notag
&&+\cdots+\ket{l-1}(\sum_{j=0}^{m-1}x_{l-1,j}\ket{j,j+l-1\mod n})\big].
\end{eqnarray}
Let $\rho_{ABC}=\proj{\psi}_{ABC}$. By computing we have the three single-party reductions as follows.
\begin{eqnarray}
\label{eq:lmm+l-1rhoa}
&& \rho_A=\frac{1}{l}(\sum_{k=0}^{l-1}\sum_{j=0}^{m-1}y_{k,j}\proj{k}),  \\
\label{eq:lmm+l-1rhob}
&& \rho_B=\frac{1}{l} (\sum_{j=0}^{m-1}\sum_{k=0}^{l-1}y_{k,j}\proj{j}),  \\
\label{eq:lmm+l-1rhoc}
&& \rho_C=\frac{1}{l} (\sum_{k=0}^{l-1}\sum_{j=0}^{m-1}y_{k,j}\proj{\overline{k+j}}),
\end{eqnarray}
where $\ket{\overline{k+j}}=\ket{k+j \mod n}$. Then one can verify that Eq. \eqref{eq:lmm+l-1rhoa-10} guarantees $\rho_A=\frac{1}{l}I_l$, Eq. \eqref{eq:lmm+l-1rhob-10} guarantees $\rho_B=\frac{1}{m}I_m$, and Eq. \eqref{eq:lmm+l-1rhoc-10} guarantees $\rho_C=\frac{1}{n} I_{n}$. Therefore, $\ket{\psi}_{ABC}$ expressed by Eq. \eqref{eq:explmml1} is a tripartite AME state in the system $l\times m\times n$. This completes the proof.
\end{proof}


By Theorem \ref{le:eslmml1} a given MSA is corresponding to an AME state in the system $l\times m\times n$ with $3\leq l< m<n\leq m+l-1$. For generic $l,m,n$ there is no rule to express the nonnegative solutions of the system of linear equations given by Eqs. \eqref{eq:lmm+l-1rhoa-10}-\eqref{eq:lmm+l-1rhoc-10}. Nevertheless, for specific $l,m,n$ one can numerically formulate the nonnegative solutions if the system of linear equations has nonnegative solutions.


Next, we formulate the expression of an AME state in the system $3\times 4\times 5$ as an example using Theorem \ref{le:eslmml1}. Since all AME states in the system $3\times 4\times 5$ are irreducible, this system becomes the first interesting case when the dimension of every subsystem is greater than $2$.

\begin{example}
\label{le:existence345}
We propose a concrete AME state in the system $3\times 4\times 5$. Let 
\beq
\label{eq:psi345}
\bal
\ket{\psi}_{ABC}&=\frac{1}{\sqrt{3}}\ket{0}(\sum_{j=0}^3x_{0j}\ket{j,j})+\ket{1}(\sum_{j=0}^3x_{1j}\ket{j,j+1})\\
&+\ket{2}(\sum_{j=0}^3x_{2j}\ket{j,j+2\mod 5}).
\eal
\eeq
Let $\rho_{ABC}=\proj{\psi}_{ABC}$. By computing we have the three single-party marginals as follows.
\begin{eqnarray}
\label{eq:345rhoA}
&& \rho_A=\frac{1}{3}\sum_{k=0}^2(\sum_{j=0}^3\abs{x_{kj}}^2)\proj{k},  \\
\label{eq:345rhoB}
&& \rho_B=\frac{1}{3}\big((\sum_{k=0}^2\abs{x_{k0}}^2)\proj{0}+(\sum_{k=0}^2\abs{x_{k1}}^2)\proj{1} \\
\notag
&& +(\sum_{k=0}^2\abs{x_{k2}}^2)\proj{2}+(\sum_{k=0}^2\abs{x_{k3}}^2)\proj{3} \big)   \\
\label{eq:345rhoC}
&& \rho_C= \frac{1}{3} \big((\sum_{\overline{k+j}=0} \abs{x_{kj}}^2) \proj{0}\\  \notag
&&+(\sum_{\overline{k+j}=1} \abs{x_{kj}}^2) \proj{1}+(\sum_{\overline{k+j}=2} \abs{x_{kj}}^2) \proj{2}\\ \notag
&&+(\sum_{\overline{k+j}=3} \abs{x_{kj}}^2) \proj{3}+(\sum_{\overline{k+j}=4} \abs{x_{kj}}^2) \proj{4}\big),
\end{eqnarray}
where $\overline{k+j}=(k+j\mod 5)$.
Let $y_{k,j}=\abs{x_{kj}}^2$, and define an array $Y:=[y_{k,j}],~0\leq k\leq 2, ~0\leq j\leq 3$. So $\rho_A=\frac{1}{3}I_3,\rho_B=\frac{1}{4}I_4,\rho_C=\frac{1}{5}I_5$ are equivalent to the following three equations respectively.
\begin{eqnarray}
\label{eq:345rhoA-1}
&& \sum_{j=0}^{3} y_{k,j}=1, \quad \forall 0\leq k\leq 2,  \\
\label{eq:345rhoB-1}
&& \sum_{k=0}^{2} y_{k,j}=\frac{3}{4}, \quad \forall 0\leq j\leq 3, \\
\label{eq:345rhoC-1}
&& \sum_{\overline{k+j}=n} y_{k,j}=\frac{3}{5}, \quad \forall 0\leq n\leq 4.
\end{eqnarray}
One can verify the following array
\beq
\label{eq:345solutionY}
Y=
\bma
\frac{12}{40} & \frac{24}{40} & \frac{4}{40} & 0 \\
0 & \frac{2}{40} & \frac{20}{40} & \frac{18}{40} \\
\frac{18}{40} & \frac{4}{40} & \frac{6}{40} & \frac{12}{40}
\ema
\eeq
is a MSA corresponding to the system $3\times 4\times 5$. Therefore, $\ket{\psi}_{ABC}$ expressed by Eq. \eqref{eq:psi345} whose coefficients given by the magic solution array $Y$ in Eq. \eqref{eq:345solutionY} is an AME state in the system $3\times 4\times 5$.
\qed
\end{example}







\section{Irreducible AME states in heterogeneous systems}
\label{sec:irreames}

The irreducible AME states are essential blocks for constructing AME states in heterogeneous systems. In this section we investigate the AME states in which kinds of heterogeneous systems are irreducible.
In Theorem \ref{le:irresuff} (i) we keep focusing on the tripartite AME states, and in Theorem \ref{le:irresuff} (ii) and (iii) we study the multipartite heterogeneous systems.

\begin{theorem}
\label{le:irresuff}
(i) Suppose $p$ is prime and $m,n$ are coprime. If $\ket{\psi}\in\cA(p,m,n)$ then $\ket{\psi}$ is an irreducible AME state.


(ii) Suppose $\ket{\psi}$ is an AME state in the system $p\times q\times d_1\times \cdots\times d_{2n-1}$, where $p$ and $q$ are prime. Then $\ket{\psi}$ is irreducible if there exists $d_i\neq pq$. Furthermore for $n=1$, $\ket{\psi}$ is irreducible if and only if there exists $d_1< pq$.

(iii) Suppose $\ket{\psi}$ is an AME state in the system $d_1\times\cdots\times d_{2n+1}$. If $\ket{\psi}$ is reducible, then there are at most two primes in $\{d_1,\cdots, d_{2n+1}\}$. Further, suppose $\ket{\psi}$ is locally unitarily equivalent to $\ket{\phi_1}\ox\ket{\phi_2}$, where $\ket{\phi_1}$ is an AME state in the system $p_1\times\cdots\times p_{2n+1}$, $\ket{\phi_2}$ is an AME state in the system $q_1\times\cdots\times q_{2n+1}$, and $d_i=p_iq_i$. Then,

(iii.a) if there is no prime in $\{d_1,\cdots, d_{2n+1}\}$, the number of $1$ in $\{p_1,\cdots,p_{2n+1}\}$ is at most one, so is $\{q_1,\cdots,q_{2n+1}\}$;

(iii.b) if only $d_1$ is prime up to a permutation of subsystems, then $p_1=1, q_1=d_1$, and $p_i,q_j>1, ~\forall i,j>1$;

(iii.c) if only $d_1$ and $d_2$ are prime up to a permutation of subsystems, then $p_1=1, q_1=d_1$, $p_2=d_2,q_2=1$, and $p_i,q_j>1, ~\forall i,j>2$.
\end{theorem}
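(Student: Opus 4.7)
My plan is to isolate a single combinatorial lemma about product decompositions of AME states, prove it once, and then read off the three parts by increasingly detailed case analysis. Throughout I will write a reducible $\ket{\psi}$ as $\ket{\psi}\cong\ket{\phi_1}\ox\ket{\phi_2}$ with $\ket{\phi_j}$ an AME state of local dimensions $(p_i^{(j)})_i$ and $d_i=p_i^{(1)}p_i^{(2)}$; since the marginals of $\ket{\psi}$ factor into products of those of $\ket{\phi_1}$ and $\ket{\phi_2}$, each $\ket{\phi_j}$ must itself be AME. The key object is the trivial-index set $T_j=\{i:p_i^{(j)}=1\}$ of each factor, and the central claim I will establish is that $|T_j|\leq 1$, and moreover when $|T_j|=1$ the effective state on the non-trivial subsystems of $\ket{\phi_j}$ is an AME state in a homogeneous $2n$-partite system. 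The homogeneity statement for $|T_j|=1$ is immediate from the impossibility of heterogeneous AME states on an even number of parties, recorded in Section~\ref{sec:pre} after Definition~\ref{def:amesmix}. For $|T_j|\geq 2$, the effective state on $M=2n+1-|T_j|$ non-trivial parties must have every $n$-subset maximally mixed; purity forces $\prod_{i\in S}p_i^{(j)}\leq\prod_{i\in S^c}p_i^{(j)}$ with $|S^c|=M-n\leq n-1$, and summing this in logarithmic form over all $n$-subsets, together with the inequality $\binom{M-1}{n-1}>\binom{M-1}{n}$ (which holds because $n-1$ lies closer than $n$ to the middle of row $M-1$ as soon as $|T_j|\geq 2$), collapses $\sum_i\log p_i^{(j)}\leq 0$ and reduces every non-trivial dimension to $1$, contradicting nontriviality. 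The degenerate cases $|T_j|\geq n+1$ are handled separately by observing that then the entire effective pure state would have to be maximally mixed on itself, which is possible only if it is trivial.

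Part (i) will fall out in one stroke: primality of $p$ forces one of $p_1^{(1)},p_1^{(2)}$ to equal $1$, and applying the Key Lemma to that factor turns $\ket{\phi_1}$ into a bipartite maximally entangled state in $m_1\times n_1$, forcing $m_1=n_1$; combined with $m_1\mid m$, $n_1\mid n$ and $\gcd(m,n)=1$ this yields $m_1=n_1=1$, so $\ket{\phi_1}$ is trivial and the decomposition collapses. For part (ii) I will enumerate $(p_1^{(1)},p_2^{(1)})\in\{(1,1),(1,q),(p,1),(p,q)\}$: the extremes $(1,1)$ and $(p,q)$ push two trivial indices into one $T_j$ and violate the Key Lemma, while in the two mixed cases the homogeneity provided by the Key Lemma forces $p_i^{(1)}=q$ and $p_i^{(2)}=p$ for every $i$, so $d_i=pq$ throughout; contrapositively, any $d_i\neq pq$ forces irreducibility. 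The $n=1$ converse will be proved constructively by viewing the AME state in $p\times q\times pq$ as a maximally entangled state across $(AB)|C$, choosing a Schmidt basis on $AB$ equal to the standard product basis, and applying a unitary on $C$ that implements the canonical isomorphism $C\cong C_1\ox C_2$ with $\dim C_1=p$ and $\dim C_2=q$, which rewrites the state as $\ket{\Phi_p^+}_{AC_1}\ox\ket{\Phi_q^+}_{BC_2}$.

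For part (iii) I will then do book-keeping on top of the Key Lemma. Any prime $d_i$ forces its index into $T_1\cup T_2$, and $|T_1\cup T_2|\leq|T_1|+|T_2|\leq 2$ bounds the number of primes by two, proving the first sentence. Subcase (iii.a) is immediate from $|T_j|\leq 1$. For (iii.c), the two prime indices split between $T_1$ and $T_2$ since each can hold at most one, and a permutation of subsystems places them at positions $1$ and $2$, producing $p_1=1,q_1=d_1$, $p_2=d_2,q_2=1$ with $p_i,q_j>1$ for $i,j>2$. For (iii.b), placing the single prime index into $T_1$ after relabelling gives $p_1=1,q_1=d_1,T_1=\{1\}$; to rule out a spurious trivial position $j>1$ in $T_2$, I will invoke the homogeneity conclusion of the Key Lemma on both factors (yielding $p_i=a$ for all $i\neq 1$ and $q_i=d_1$ for all $i\neq j$, so that $d_j=a$ must be composite), and then redistribute a nontrivial divisor of $a$ from $p_j$ into $q_j$ to obtain an alternative canonical decomposition with $T_2=\emptyset$.

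The main obstacle will be the binomial-coefficient averaging that closes the Key Lemma for $|T_j|\geq 2$: one has to carefully track which $n$-subsets of the original $(2n+1)$-partite system restrict to which subsets of the non-trivial parties before the combinatorial comparison can be applied, and then check the monotonicity $\binom{M-1}{n-1}>\binom{M-1}{n}$ uniformly in $|T_j|$. A secondary delicate point is the re-factorization used to canonicalize the decomposition in subcase (iii.b), where one must check that the homogeneity structure supplied by the Key Lemma leaves enough room to move composite factors between $\ket{\phi_1}$ and $\ket{\phi_2}$ without breaking either factor's AME property.
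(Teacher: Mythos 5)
Your overall route coincides with the paper's: write a putative decomposition $\ket{\phi_1}\ox\ket{\phi_2}$, control the set of trivial local dimensions of each factor, and exploit the fact that an AME state on an even number of parties must live in a homogeneous system. Where you go beyond the paper is in making explicit two steps it only asserts: the binomial-coefficient averaging showing that a factor with two or more trivial local dimensions collapses to a trivial state (the paper simply declares that such a factor ``can be taken as a $(2n-1)$-partite state and thus isn't a $(2n+1)$-partite AME state''), and the constructive converse for $n=1$ in part (ii), where you use that $\rho_{AB}$ is maximally mixed to take the product basis of $\cH_A\ox\cH_B$ as a Schmidt basis across $(AB)|C$ and then split $C$ by a local unitary. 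Both steps are correct, and your four-case enumeration in (ii) also closes the case in which $p$ and $q$ are assigned to the same factor, which the paper passes over silently.

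The genuine gap is in (iii.b). You correctly isolate the configuration that must be excluded --- $T_1=\{1\}$ and $T_2=\{j\}$ for some $j>1$ --- but the proposed redistribution does not repair it: moving a nontrivial divisor of $a$ from $p_j$ into $q_j$ changes the local dimensions of both factors, and the resulting systems generally contain no AME state, so no alternative decomposition with $T_2=\emptyset$ is produced. Worse, the configuration is realizable. Take $n=1$ and $\ket{\Phi}=\big(\tfrac{1}{2}\sum_{k=0}^{3}\ket{kk}_{B_1C}\big)\ox\big(\tfrac{1}{\sqrt2}\sum_{l=0}^{1}\ket{ll}_{AB_2}\big)$, an AME state in $2\times 8\times 4$ with $B=B_1B_2$ (all single-party marginals are maximally mixed). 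Here only $d_1=2$ is prime, yet the decomposition has $(p_1,p_2,p_3)=(1,4,4)$ and $(q_1,q_2,q_3)=(2,2,1)$, so $q_3=1$; and the candidate re-factorization into $1\times4\times2$ admits no AME state (a pure bipartite state in $4\times2$ cannot have a rank-$4$ marginal). So your argument for (iii.b) cannot be completed as written, and the example in fact contradicts (iii.b) as literally stated. Note that the paper itself only asserts (iii.a)--(iii.c) ``with the same idea'' and supplies no proof of (iii.b) either, so this is a defect of the statement rather than merely of your write-up; the rest of your proposal stands.
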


\begin{proof}
(i) We prove it by contradiction. Suppose $\ket{\psi}$ is a reducible AME state. By Definition \ref{df:irreducible}, there exists a local unitary $U\ox V\ox W$ such that $(U\ox V\ox W)\ket{\psi}=\ket{\phi_1}\ox \ket{\phi_2}$, where $\ket{\phi_1}$ is an AME state in the system $1\times m_1\times n_1$ and $\ket{\phi_2}$ is an AME state in $p\times m_2\times n_2$. It implies that $m_1=n_1$ which contradicts with $m$ and $n$ are coprime. Therefore, $\ket{\psi}$ is irreducible.


(ii) We prove it by contradiction. Suppose $\ket{\psi}$ is a reducible AME states. By definition $\ket{\psi}$ is locally unitarily equivalent to $\ket{\phi_1}\ox\ket{\phi_2}$, where $\ket{\phi_1}$ is an AME state in the system $p\times 1\times k_1\times\cdots\times k_{2n-1}$, $\ket{\phi_2}$ is an AME state in $1\times q\times l_1\times\cdots\times l_{2n-1}$, and $d_i=k_il_i$. So $\ket{\phi_1}$ and $\ket{\phi_2}$ can be taken as $2n$-partite AME state. Since there is no AME state in heterogeneous systems of even number of parties, it follows that $k_i=p, l_j=q, \forall i,j$. It is equivalent to that $d_i=pq,\forall i$. So we obtain the contradiction, and thus the assertion holds. Furthermore for $n=1$, we have $\ket{\psi}$ is a tripartite reducible AME state if and only if $m=pq$.

(iii) We prove it by contradiction. Suppose $d_1,d_2,d_3$ are three primes, and $\ket{\psi}$ is a reducible AME state in the system $d_1\times\cdots\times d_{2n+1}$. By definition $\ket{\psi}$ is locally unitarily equivalent to $\ket{\phi_1}\ox\ket{\phi_2}$, where $\ket{\phi_1}$ is an AME state in the system $p_1\times\cdots\times p_{2n+1}$, $\ket{\phi_2}$ is an AME state in the system $q_1\times\cdots\times q_{2n+1}$, and $d_i=p_iq_i$. If $p_1=p_2=1$, then $\ket{\phi_1}$ can be taken as a $(2n-1)$-partite state, and thus it isn't a $(2n+1)$-partite AME state. One can similarly exclude that $q_1=q_2=1$. Therefore, there is at most one $1$ in the set $\{p_1,p_2,p_3\}$, the same for the set $\{q_1,q_2,q_3\}$. It contradicts with the three $d_1,d_2,d_3$ are prime. So the assertion (iii) holds. When there are at most two primes in $\{d_1,\cdots, d_{2n+1}\}$, one can verify the three specific cases (iii.a)-(iii.c) with the same idea.

This completes the proof.
\end{proof}

It follows from Theorem \ref{le:irresuff} (iii) that if there are at least three primes among $\{d_1,\cdots,d_{2n+1}\}$, then every AME state in the system ${d_1}\times\cdots\times {d_{2n+1}}$ is irreducible.
Here we want to emphasize that if there is a reducible AME state in the system $d_1\times\cdots\times d_{n}$, it doesn't imply that every AME state in the system $d_1\times\cdots\times d_{n}$ is reducible, i.e., there could be irreducible AME states in the system $d_1\times\cdots\times d_{n}$. The following example supports our claim.
First, it is known that there exist reducible AME states in the system $2\times 4\times 4$, e.g., $(\ket{000}+\ket{111})_{AB_1C_1}\ox (\ket{00}+\ket{11})_{B_2C_2}$. Second, We show that $\ket{\psi}_{ABC}=\ket{0,x}+\ket{1,y}$, where $\ket{x}=\frac{1}{2}(\ket{00}+\ket{11}+\ket{22}+\ket{33})$ and $\ket{y}=\frac{1}{2}(\ket{01}+\ket{12}+\ket{23}+\ket{30})$ is an irreducible AME state in the system $2\times 4\times 4$. One can show that the range of $\ket{x}$ and $\ket{y}$ has no bipartite state of Schmidt rank two in $\cH_{BC}$. By computing one can verify $\ket{\psi}$ is an AME state. We next show it is irreducible. Assume that $\ket{\psi}$ is reducible. It follows that $\ket{\psi}=(\ket{0,a}+\ket{1,b})_{AB_1C_1}\ox \ket{c}_{B_2C_2}$ such that system $B=B_1B_2$ and system $C=C_1C_2$. It implies that $(\ket{0,a}+\ket{1,b})$ is three-qubit state and $\ket{c}$ is a two-qubit state. It is known that the span of $\ket{a}$ and $\ket{b}$ has a product vector. Hence, the span of $\ket{a,c}$ and $\ket{b,c}$ has a bipartite state of Schmidt rank two in $\cH_{BC}$. Since $\ket{a,c}=\ket{x}, ~\ket{b,c}=\ket{y}$, we obtain the contradiction. Therefore, $\ket{\psi}$ is an irreducible AME state in the system $2\times 4\times 4$. Hence, there are both reducible and irreducible AME states in the system $2\times 4\times 4$.


\section{Applications}
\label{sec:app}

In this section we indicate some applications of our results in previous sections. In Sec. \ref{subsec:conskuniform} we propose some methods to construct $k$-uniform states of more parties from two AME states. In. Sec. \ref{sec:multiu} we build the connection between AME states in heterogeneous systems and multi-isometry matrices. In Sec. \ref{subsec:steering} we introduce the application of AME states in quantum steering.

\subsection{Construction of $k$-uniform states in heterogeneous systems}
\label{subsec:conskuniform}

Here we present some methods to construct $k$-uniform states using two existing AME states. The methods in Lemma \ref{le:2nto2n+1} (ii) and (iii) are also very useful to construct other kinds of GME states \cite{ylge2019}.

\begin{lemma}
\label{le:2nto2n+1}
(i) If $\ket{\psi}$ is a $(2n)$-partite AME state in the composite system of $(AB),C_1,C_2,\cdots C_{2n-1}$, it is also a $(2n+1)$-partite AME state in the system of $A,B,C_1,C_2,\cdots C_{2n-1}$.

(ii) Suppose $\ket{\psi}$ is a $(2n)$-partite AME state in the system of $A, C_{1,1},\cdots,C_{1,2n-1}$, and $\ket{\phi}$ is a $(2n)$-partite AME state in the system of $B, C_{2,1},\cdots,C_{2,2n-1}$. Then $\ket{\psi}\ox\ket{\phi}$ is a $(2n+1)$-partite AME state in the system of $A,B,C_1,\cdots, C_{2n-1}$, where $C_j=(C_{1,j}C_{2,j})$.

(iii) Suppose $\ket{\psi}$ is a $(2n+1)$-partite AME state in the system of $A, C_{1,1},\cdots,C_{1,2n}$, and $\ket{\phi}$ is a $(2n+1)$-partite AME state in the system of $B, C_{2,1},\cdots,C_{2,2n}$. Then $\ket{\psi}\ox\ket{\phi}$ is an $n$-uniform state in the system of $A,B,C_1,\cdots, C_{2n-1}$, where $C_j=(C_{1,j}C_{2,j})$.
\end{lemma}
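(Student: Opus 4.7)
My plan is to prove all three parts by directly examining reduced density matrices, leveraging one elementary fact: if a pure state is $k$-uniform, then it is also $k'$-uniform for every $k'\leq k$, since smaller marginals are obtained from $k$-marginals by additional partial tracing. Throughout I will denote the kept parties by a subset $S$ and analyze how $S$ intersects each tensor factor in the construction.

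For part (i), the state $\ket{\psi}$ is $n$-uniform when viewed as a $(2n)$-partite state. To show that it remains AME after splitting $(AB)$ into $A$ and $B$, I need every $n$-party marginal in the refined partition to be maximally mixed. Fix $S\subseteq\{A,B,C_1,\ldots,C_{2n-1}\}$ with $|S|=n$ and split into three cases. (a) If $\{A,B\}\subseteq S$, then $S$ corresponds, after re-merging $AB$, to an $(n{-}1)$-party subset of the original system, and $(n{-}1)$-uniformity gives $\rho_S\propto I$. (b) If exactly one of $A,B$ lies in $S$, say $A\in S$ and $B\notin S$, I first take the $n$-marginal of $\ket{\psi}$ on $\{(AB)\}\cup(S\cap\{C_j\})$, which is $\propto I$ by AME; partial-tracing the $B$-factor of $(AB)$ then collapses the maximally mixed state on $\cH_{(AB)}\ox\cH_T$ to the maximally mixed state on $\cH_A\ox\cH_T$, so $\rho_S\propto I$. (c) If $S\cap\{A,B\}=\varnothing$, then $S$ is literally an $n$-subset of the original system, and AME directly gives $\rho_S\propto I$.

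Parts (ii) and (iii) share a common strategy. Given $S$ of size $n$ in the combined system, define $S_A:=(\{A\}\cap S)\cup\{C_{1,j}:C_j\in S\}$ and $S_B:=(\{B\}\cap S)\cup\{C_{2,j}:C_j\in S\}$ on the two tensor factors. A short count gives $|S_A|=n-[B\in S]\leq n$ and $|S_B|=n-[A\in S]\leq n$, where $[\cdot]$ denotes the Iverson bracket. In part (ii) the state $\ket{\psi}$ is $n$-uniform because it is a $(2n)$-partite AME state; in part (iii) it is $n$-uniform because a $(2n{+}1)$-partite AME state is $\lfloor(2n{+}1)/2\rfloor=n$-uniform; the same applies to $\ket{\phi}$. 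Consequently $\tr_{\overline{S_A}}\proj{\psi}$ and $\tr_{\overline{S_B}}\proj{\phi}$ are both proportional to the identity on their respective Hilbert spaces. Because $\ket{\psi}\ox\ket{\phi}$ is a product state across the bipartition separating the $(A,C_{1,*})$ and $(B,C_{2,*})$ tensor factors, the $S$-marginal factorizes as the tensor product of these two maximally mixed states, which coincides with the maximally mixed state on $\cH_S$ upon using $\dim\cH_{C_j}=\dim\cH_{C_{1,j}}\cdot\dim\cH_{C_{2,j}}$. This yields AME in (ii) and $n$-uniformity in (iii); the reason one cannot push (iii) to $(n{+}1)$-uniformity is that $(n{+}1)$-marginals of a $(2n{+}1)$-partite AME state need not be maximally mixed, so the bound $|S_A|\leq n$ would fail for $|S|=n+1$.

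The only mild obstacle is careful combinatorial bookkeeping in the three cases of part (i) and in matching Hilbert-space dimensions when subsystems $C_j=(C_{1,j}C_{2,j})$ are merged; I do not foresee any substantive mathematical difficulty.
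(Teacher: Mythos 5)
Your proof is correct and follows essentially the same route as the paper: in part (i) the paper likewise checks the four marginal types (no $A,B$; both $A,B$; only $A$; only $B$) using $n$- and $(n-1)$-uniformity of the original state plus partial tracing of $B$ or $A$ from the $(AB)$-factor, and in parts (ii)–(iii) it likewise factorizes each marginal of the product state across the two tensor factors and invokes $n$-uniformity of each factor. Your explicit bookkeeping with $S_A$, $S_B$ and the remark on why (iii) cannot be pushed to $(n+1)$-uniformity are just slightly more detailed versions of the same argument.
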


\begin{proof}
(i) Suppose $\rho=\proj{\psi}$. By computing,
\begin{eqnarray}
\label{eq:rhoc-1}
\notag
&& \rho_{C_{j_1}\cdots C_{j_n}}\propto I,   \\
\notag
\label{eq:rhoabc}
&& \rho_{ABC_{j_1}\cdots C_{j_{n-2}}}\propto I,   \\
\notag
\label{eq:rhoac}
&& \rho_{AC_{j_1}\cdots C_{j_{n-1}}}=\tr_B\rho_{ABC_{j_1}\cdots C_{j_{n-1}}}\propto I,  \\
\notag
\label{eq:rhobc}
&& \rho_{BC_{j_1}\cdots C_{j_{n-1}}}=\tr_A\rho_{ABC_{j_1}\cdots C_{j_{n-1}}}\propto I.
\end{eqnarray}
Therefore, $\ket{\psi}$ is also a $(2n+1)$-partite AME state in the system of $A,B,C_1,C_2,\cdots C_{2n-1}$.

(ii) Suppose $\sigma=\proj{\psi}$, $\gamma=\proj{\phi}$, and $\rho$ is the density matrix of $\ket{\psi}\ox\ket{\phi}$. By computing we have
\begin{eqnarray}
\label{eq:rhoc}
\notag
&& \rho_{C_{j_1}\cdots C_{j_n}}=\sigma_{C_{1,j_1}\cdots C_{1,j_n}}\ox\gamma_{C_{2,j_1}\cdots C_{2,j_n}}\propto I,   \\
\notag
\label{eq:rhoac}
&& \rho_{AC_{j_1}\cdots C_{j_{n-1}}}=\sigma_{AC_{1,j_1}\cdots C_{1,j_{n-1}}}\ox\gamma_{C_{2,j_1}\cdots C_{2,j_{n-1}}}\propto I,  \\
\notag
\label{eq:rhobc}
&& \rho_{BC_{j_1}\cdots C_{j_{n-1}}}=\sigma_{C_{1,j_1}\cdots C_{1,j_{n-1}}}\ox\gamma_{BC_{2,j_1}\cdots C_{2,j_{n-1}}}\propto I,  \\
\notag
\label{eq:rhoabc}
&& \rho_{ABC_{j_1}\cdots C_{j_{n-2}}}=\sigma_{AC_{1,j_1}\cdots C_{1,j_{n-2}}}\ox\gamma_{BC_{2,j_1}\cdots C_{2,j_{n-2}}}\propto I.
\end{eqnarray}
Therefore, $\ket{\psi}\ox\ket{\phi}$ is a $(2n+1)$-partite AME state in the system of  $A,B,C_1,\cdots, C_{2n-1}$, where $C_j=(C_{1,j}C_{2,j})$.

(iii) The proof is similar to (ii).

This completes the proof.
\end{proof}

\subsection{AME states in heterogeneous systems and multi-isometry matrices}
\label{sec:multiu}

In Ref. \cite{amemultiunitary} the authors introduced the concept of {\em multiunitary matrices}, and build the connection between it and AME states in homogeneous systems. In this subsection we first recall the multiunitarity property and then generalize it to the concept of multi-isometry matrices for AME states in heterogeneous systems.

The square matrix $A$ of order $d^k$ acting on a composed Hilbert space $\cH_d^{\ox k}$, and represented by
\beq
\label{eq:multiu}
(A)_{\m_1,\cdots,\m_k\atop\n_1,\cdots,\n_k}=\bra{\m_1,\cdots,\m_k}A\ket{\n_1,\cdots,\n_k}=a_{\m_1,\cdots,\m_k\atop\n_1,\cdots,\n_k}
\eeq
is called $k$-unitary if it is unitary for all possible ${2k\choose k}$ reordering of its indices, corresponding to all possible choices of $k$ indices out of $2k$ \cite{amemultiunitary}. Here $\m_i,\n_j=0,\cdots,d-1$, and each forms an orthonormal basis of $\cH_d$. Then one can construct the following unnormalized pure state in the Hilbert space $\cH_d^{\ox 2k}$
\beq
\label{eq:multiuame}
\ket{\phi}=\sum_{\m_1,\cdots,\m_k,\atop\n_1,\cdots,\n_k=0}^{d-1}a_{\m_1,\cdots,\m_k\atop\n_1,\cdots,\n_k}\ket{\m_1,\cdots,\m_k,\n_1,\cdots,\n_k}.
\eeq
It follows from the multiunitarity property that $\ket{\phi}$ in Eq. \eqref{eq:multiuame} is an unnormalized AME state in $\cH_d^{\ox 2k}$. In special, if $A$ is a $1$-unitary matrix, it is a standard unitary matrix. So the first interesting case is $2$-unitary matrices. Goyeneche \etal constructed a concrete $2$-unitary matrix using the AME state in $\cA(4,3)$, and derived several results on $2$-unitary matrices \cite{amemultiunitary}.

As the above discussion, the multiunitary matrices are closely related to AME states in homogeneous systems of even parties. In a direct generation, we introduce the multi-isometry matrices which are connected to the AME states shared with odd parties. A matrix $M$ is called isometry if $M^\dg M=I$. Then we can similarly define the {\em multi-isometry} matrices.

The matrix $A$ represented by
\beq
\label{eq:multiidef}
(A)_{\m_1,\cdots,\m_{k+1}\atop\n_1,\cdots,\n_{k}}=\bra{\m_1,\cdots,\m_{k+1}}A\ket{\n_1,\cdots,\n_{k}}=a_{\m_1,\cdots,\m_{k+1}\atop\n_1,\cdots,\n_{k}}
\eeq
is called $k$-isometry if $\tr(A^\dg A)$ is constant and $A^\dg A\propto I$ for all possible ${2k+1\choose k}$ reordering of its indices, corresponding to all possible choices of $k$ indices out of $(2k+1)$. Here $\m_i=0,\cdots,d_i-1,$ which is an orthonormal basis of $\cH_{d_i}$, and $\n_j=0,\cdots,l_j-1,$ which is an orthonormal basis of $\cH_{l_j}$. As an extension we present the following lemma.
\begin{lemma}
\label{le:amesmultiiso}
Suppose $A$ is a $k$-isometry matrix whose elements $a_{\m_1,\cdots,\m_{k+1}\atop\n_1,\cdots,\n_k}$ are given by Eq. \eqref{eq:multiidef}. Then the following $(2k+1)$-partite state 
\beq
\label{eq:amemultiiso1}
\ket{\phi}=\sum_{\m_1,\cdots,\m_{k+1},\atop\n_1,\cdots,\n_k=0}^{d-1}a_{\m_1,\cdots,\m_{k+1}\atop\n_1,\cdots,\n_k}\ket{\m_1,\cdots,\m_{k+1},\n_1,\cdots,\n_k}
\eeq
is an unnormalized AME state in the system $d_1\times\cdots\times d_{k+1}\times l_1\times\cdots\times l_k$.
\end{lemma}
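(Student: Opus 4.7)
The plan is to adapt the argument relating multiunitary matrices to AME states in homogeneous systems (recalled just above the lemma), replacing unitarity with isometry and carefully tracking the heterogeneous dimensions. Since $\ket{\phi}$ in Eq. \eqref{eq:amemultiiso1} is a $(2k+1)$-partite state and $\lfloor (2k+1)/2\rfloor = k$, being AME under Definition \ref{def:amesmix} amounts to showing that for every $k$-element subset $S$ of the $2k+1$ party labels, the reduction $\rho_S := \tr_{S^c}\proj{\phi}$ is proportional to the identity on the corresponding $k$-party Hilbert space.

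The key observation is that the coefficient tensor $a_{\m_1,\ldots,\m_{k+1},\n_1,\ldots,\n_k}$ can be reshaped in $\binom{2k+1}{k}$ different ways by choosing which $k$ of its $2k+1$ indices serve as the ``column'' indices of a rectangular matrix $A_S$ and the remaining $k+1$ as its ``row'' indices. By the definition of a $k$-isometry matrix, every such reshape satisfies $A_S^\dg A_S \propto I$, which is exactly the algebraic condition needed to force a maximally mixed marginal on the $S$-side.

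Concretely, for a fixed $S$ I would expand $\proj{\phi}$ in the computational basis, trace over the $k+1$ indices in $S^c$, and observe that the resulting matrix elements of $\rho_S$ coincide (up to complex conjugation) with those of $A_S^\dg A_S$. Combined with the $k$-isometry assumption, this yields $\rho_S \propto I$, so the $k$-party marginal indexed by $S$ is maximally mixed. Since $S$ was arbitrary, $\ket{\phi}$ is $k$-uniform in the sense of Definition \ref{def:amesmix}(i); because it has $2k+1$ parties, it therefore lies (after normalization) in $\cA(d_1,\ldots,d_{k+1},l_1,\ldots,l_k)$.

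The only delicate step is purely notational: keeping track of the correspondence between ``tensor indices in $S$'' and ``column indices of a reshaped matrix'' as $S$ varies, in particular when $S$ contains a mixture of $\m$-type and $\n$-type indices, and verifying that the proportionality constant in $A_S^\dg A_S = c_S I$ yields the correct normalization of $\rho_S$. The latter follows from $\tr(A_S^\dg A_S) = \sum \abs{a}^2 = \braket{\phi}{\phi}$ being invariant under the reshape. Apart from this bookkeeping, the argument is a direct calculation, and no conceptual ingredient beyond the multiunitary case of Ref. \cite{amemultiunitary} is required.
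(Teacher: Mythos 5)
Your proposal is correct and follows essentially the same route as the paper: identify each $k$-party marginal $\rho_S$ of $\proj{\phi}$ with $A_S^\dg A_S$ for the corresponding reshape of the coefficient tensor, and invoke the $k$-isometry condition to get $\rho_S\propto I$ for all $\binom{2k+1}{k}$ choices of $S$. Your additional remark that $\tr(A_S^\dg A_S)=\braket{\phi}{\phi}$ fixes the proportionality constants is a useful bookkeeping point that the paper leaves implicit.
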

This lemma follows from the fact that each $k$-partite marginal of $\proj{\phi}$ is equal to $A^\dg A$, where $A$ is represented in a product basis corresponding to a choice of $k$ indices out of $(2k+1)$. Since $A$ is $k$-isometry, each $k$-partite marginal of $\proj{\phi}$ is proportional to the identity. Thus $\ket{\phi}$ in Eq. \eqref{eq:amemultiiso1} is an unnormalized AME states.

It is different from the multiunitarity property that $1$-isometry matrices are not equivalent to the standard isometry matrices. In other words, there exist standard isometry matrices which are not $1$-isometry. The $1$-isometry property requires that the matrices constructed by $3$ different ways of choosing a bipartition of the indices are all proportional to the identity. Hence, it's interesting to study the relation between $1$-isometry matrices and tripartite AME states. In order to better understand the multi-isometry property, we construct a concrete $1$-isometry matrix using the tripartite AME state in the system $3\times 4\times 5$ which is in Example \ref{le:existence345}. By multiplying a coefficient we write it as
\beq
\label{eq:psi345unn}
\bal
\ket{\psi}_{ABC}&=\sum_{\m_1=0}^2\sum_{\m_2=0}^3\sum_{\n_1=0}^4 a_{\m_1,\m_2\atop \n_1}\ket{\m_1,\m_2,\n_1}\\
&=\sqrt{\frac{1}{2}}\ket{000}+\ket{011}+\sqrt{\frac{1}{6}}\ket{022}\\
&+\sqrt{\frac{1}{12}}\ket{112}+\sqrt{\frac{5}{6}}\ket{123}+\sqrt{\frac{3}{4}}\ket{134}\\
&+\sqrt{\frac{3}{4}}\ket{202}+\sqrt{\frac{1}{6}}\ket{213}+\sqrt{\frac{1}{4}}\ket{224}+\sqrt{\frac{1}{2}}\ket{230}.
\eal
\eeq
By computing we have 
\beq
\label{eq:trimultii345}
\bal
A^0&=\sum_{\m_1=0}^2\sum_{\m_2=0}^3\sum_{\n_1=0}^4 a_{\m_1,\m_2\atop \n_1}\ketbra{\m_1,\m_2}{\n_1},~ (A^0)^\dg(A^0)=I_5,\\
A^1&=\sum_{\m_1=0}^2\sum_{\m_2=0}^3\sum_{\n_1=0}^4 a_{\m_2,\n_1\atop \m_1}\ketbra{\m_2,\n_1}{\m_1}, ~(A^1)^\dg(A^1)=\frac{5}{3}I_3,\\
A^2&=\sum_{\m_1=0}^2\sum_{\m_2=0}^3\sum_{\n_1=0}^4 a_{\n_1,\m_1\atop \m_2}\ketbra{\n_1,\m_1}{\m_2}, ~(A^2)^\dg(A^2)=\frac{5}{4}I_4.\\
\eal
\eeq
Thus, by definition each $A^0,A^1,A^2$ in Eq. \eqref{eq:trimultii345} is $1$-isometry.

\subsection{Quantum steering for heterogeneous systems}
\label{subsec:steering}

Steering has been found useful in a number of applications such as subchannel discrimination and one-sided deviceindependent quantum key distribution. Thus, detection and characterization of steering have recently attracted increasing attention. In Ref. \cite{steerct16} the authors propose a general framework for constructing universal steering criteria that are applicable to arbitrary measurement settings of the steering party.
Here we introduce the quantum steering as an application of the tripartite AME states in \eqref{eq:mxmxn}. Suppose it is controlled by the system Alice, Bob and Charlie. If
$n=m^2$, then Alice and Bob are in the maximally mixed state $\r_{AB}={1\over n^2}I_{n^2}$. This is a separable state, and also a classical-classical state \cite{ccm11}. Using the projective POVM $\{\proj{j},j=1,...,n^2\}$, Charlie can steer the state $\r_{AB}$ into the maximally entangled state $\ket{\Ps_j}$ with probability $1/n^2$. Since any two maximally entangled states are LU equivalent, Alice and Bob can convert $\ket{\Ps_j}$ into the standard maximally entangled states ${1\over\sqrt n}\sum^n_{j=1}\ket{jj}$. One can show that the same argument works when $n<m^2$, though $\r_{AB}$ may be not separable.

Similarly, Alice (or Bob) may perform the projective POVM $\{\proj{j},j=1,...,n\}$ on system $A$ (or $B$), so as to steer the state $\r_{BC}$ of Bob and Charlie (or $\r_{AC}$ of Alice and Charlie) into the standard maximally entangled state up to LU equivalence. Since $\r_{BC}$ and $\r_{AC}$ are both rank-$n$ mixed entangled states, the steering is a kind of as system-assisted and one-copy entanglement distillation with probability one \cite{assisteddis13}. 




\section{Concluding remarks}
\label{sec:con}


In this paper we mainly investigated AME states in tripartite heterogeneous systems.
First we introduced the concept of irreducible AME states as the essential elements to construct AME states with high local dimensions.
Then we derived the expressions of AME states in a subset of general tripartite heterogeneous systems based on the expressions of AME states in systems $2\times m\times (m+n)$. We showed that the existence of AME states in systems $l\times m\times n$ with $3\leq l<m<n\leq m+l-1$ is related to the corresponding MSAs.
Moreover, we studied the irreducible AME in multipartite heterogeneous systems deeply. We presented sufficient conditions for multipartite heterogeneous systems such that the AME states in them are irreducible.
Finally we have shown our results are useful for the construction of $k$-uniform states of more parties and can be applied to quantum steering. We additionally revealed the connection between heterogeneous AME states and multi-isometry matrices.

There are still some open problems for AME states in tripartite heterogeneous systems. Although for a specific system $l\times m\times n$ the corresponding MSA can be found numerically if it exists, there is no criteria for the parameters $l,m,n$ such that the systems $l\times m\times n$ contain the corresponding MSAs. 
If there is no corresponding MSA, we need to find other efficient ways to construct AME states in systems $l\times m\times n$. It is also an interesting problem to determine whether a heterogeneous system includes both reducible and irreducible AME states. Lastly one may ask whether there exist AME states in multipartite heterogeneous systems. A possible direction for this problem is to find the multi-isometry matrices described in Sec. \ref{sec:multiu}.

\section*{Acknowledgments}

We appreciate Felix Huber for his valuable suggestions. LC and YS were supported by the NNSF of China (Grant No. 11871089), and the Fundamental Research Funds for the Central
Universities (Grant Nos. KG12080401, and ZG216S1902).

\appendix

\section{Direct Proof of Lemma \ref{thm:2xmxn}}
\label{sec:dirpf}

\begin{proof}
First, if $n=0$, the existence of AME states in systems $2\times m\times m$ follows directly from Lemma \ref{le:trilmnch} (i).

Second, we investigate the case when $n\geq 1$.
Suppose $\ket{\psi}_{ABC}$ is an arbitrary tripartite AME state in the system $2\times m\times (m+n)$. Up to a local unitary operation, we can assume that 
\begin{eqnarray}
\label{eq:psi=2xmxn}	
\ket{\psi}_{ABC}=\frac{1}{\sqrt{2}}[\ket{0}(\sum_{j=0}^{m-1}x_j\ket{j,j})+\ket{1}(\sum_{j=0}^{m-1}y_j \ket{\a_j,\b_j})] 
\end{eqnarray}
where $x_j,y_j\ge0$ with $\sum_jx_j^2=\sum_jy_j^2=1$, and $\{\ket{\a_j}\}$ and $\{\ket{\b_j}\}$ are two sets of orthonormal vectors in Hilbert spaces $\cH_m$ and $\cH_{(m+n)}$, respectively. Let $\rho=\proj{\psi}$ be the density matrix of $\ket{\psi}_{ABC}$. By computing its single-party reductions we have
\begin{eqnarray}
\label{eq:2mm+nrhoa}
&& \rho_A=\frac 12 I_2,  \\
\label{eq:2mm+nrhob}
&& \rho_B=\frac{1}{2} (\sum_{j=0}^{m-1}x_j^2\proj{j})+\frac{1}{2} (\sum_{j=0}^{m-1} y_j^2 \proj{\a_j}),  \\
\label{eq:2mm+nrhoc}
&& \rho_C=\frac{1}{2} (\sum_{j=0}^{m-1}x_j^2\proj{j})+\frac{1}{2} (\sum_{j=0}^{m-1} y_j^2 \proj{\b_j}).
\end{eqnarray}
By Definition \ref{def:amesmix} it requires $\rho_B=\frac 1m I_m$ and $\rho_C=\frac{1}{m+n} I_{m+n}$.
Since $m\ge n\ge1$, it follows from Eq. \eqref{eq:2mm+nrhoc} that at least $n$ elements of $\{x_j^2\}_{j=0}^{m-1}$ are equal to $\frac{2}{m+n}$. Up to a local unitary transformation on $\cH^B\otimes\cH^C$, we can assume that $x_0^2=x_1^2=\cdots=x_{n-1}^2=\frac{2}{m+n}$. Using Eqs. \eqref{eq:2mm+nrhob}-\eqref{eq:2mm+nrhoc}, we obtain the set $\big\{y_0^2,\cdots, y_{m-1}^2\big\}$ and the following two sets are the same set. 
\beq
\label{eq:contraction-1}
\bal
&\big\{\underbrace{\frac{2n}{m(m+n)},\cdots, \frac{2n}{m(m+n)}}_n, \\
&\frac{2}{m}-x_n^2,\frac{2}{m}-x_{n+1}^2,\cdots, \frac{2}{m}-x_{m-1}^2\big\},
\eal
\eeq
\beq
\label{eq:contraction-2}
\bal
&\big\{\frac{2}{m+n}-x_n^2,\frac{2}{m+n}-x_{n+1}^2,\cdots, \frac{2}{m+n}-x_{m-1}^2,\\
&\underbrace{\frac{2}{m+n},\cdots,\frac{2}{m+n}}_n\big\}.
\eal
\eeq

We first assume $d=(m\mod n)$, and $d>0$. Without loss of generality, we may assume 
$$\frac{2}{m+n}-x_n^2=\cdots=\frac{2}{m+n}-x_{2n-1}^2=\frac{2n}{m(m+n)}.$$
Thus $x_n^2=\cdots=x_{2n-1}^2=\frac{2m-2n}{m(m+n)}$. If $\frac{2}{m}-x_n^2=\cdots=\frac{2}{m}-x_{2n-1}^2=\frac{2}{m+n}$, it implies $\frac{4n}{m(m+n)}=\frac{2}{m+n}$ which contradicts with $d>0$. Hence, we may further assume 
\begin{eqnarray}
\notag
&&\frac{2}{m}-x_n^2=\cdots=\frac{2}{m}-x_{2n-1}^2\\
\notag
&&=\frac{2}{m+n}-x_{2n}^2=\cdots\frac{2}{m+n}-x_{3n-1}^2.
\end{eqnarray}
It implies the following recursive relation $x_{(j-1)n}^2-x_{jn}^2=\frac{2n}{m(m+n)}$. Since the two sets given by \eqref{eq:contraction-1} and \eqref{eq:contraction-2} are equal, by computing it requires that $\frac{2}{m}-x_{sn}^2=\cdots= \frac{2}{m}-x_{(s+1)n-1}^2=\frac{(2s+2)n}{m(m+n)}=\frac{2}{m+n}$ for some integer $s$. It follows that $m=(s+1)n$ which contradicts with $d>0$.
Hence, the two sets \eqref{eq:contraction-1} and \eqref{eq:contraction-2} cannot be equal if $d>0$.
Next we suppose $m=kn$. We may assume $\ket{\a_j,\b_j}=\ket{j,j+n},~\forall j$. In the following we will show there exist $x_j$'s and $y_j$'s such that $\ket{\psi}_{ABC}$ is a tripartite AME state. By Eqs. \eqref{eq:2mm+nrhob}-\eqref{eq:2mm+nrhoc}, we formulate the system of equations as follows.
\begin{equation}
\label{eq:equations}
\left\{
\begin{aligned}
&x_j^2+y_j^2= \frac{2}{m}, \quad & 0\leq j\leq m-1; \\
&x_0^2=\cdots= x_{n-1}^2=\frac{2}{m+n};   \\
&x_{n+k}^2+y_k^2=\frac{2}{m+n}, \quad & 0\leq k\leq m-n-1;  \\
&y_{m-n}^2=\cdots= y_{m-1}^2=\frac{2}{m+n}.
\end{aligned}
\right.
\end{equation} 
One can verify the system of equations \eqref{eq:equations} has the system of solutions as follows. For any $0\leq j\leq k-1$,
\begin{equation}
\label{eq:solutions}
\left\{
\begin{aligned}
&x_{jn}^2=\cdots= x_{(j+1)n-1}^2=\frac{2m-2jn}{m(m+n)}=\frac{2k-2j}{k(k+1)n},   \\
&y_{jn}^2=\cdots= y_{(j+1)n-1}^2=\frac{2(j+1)n}{m(m+n)}=\frac{2j+2}{k(k+1)n}.  
\end{aligned}
\right.
\end{equation}
Therefore, when $m=kn$, the tripartite state $\ket{\psi}_{ABC}$ in \eqref{eq:psi=2xmxn} with $\ket{\a_j,\b_j}=\ket{j,j+n},~\forall j$ and coefficients in Eq. \eqref{eq:solutions} is a tripartite AME state in $2\times m\times (m+n)$.
This completes the proof.
\end{proof}

\bibliography{ames}
\end{document}